\newclass{\PromiseNP}{PromiseNP}
\newtoks\bsubfloattoks
\newdimen\bsubfloatht
\newcommand{\bsubfloat}[2][]{%
  \sbox\z@{#2}%
  \ifdim\bsubfloatht<\ht\z@
    \bsubfloatht=\ht\z@
  \fi
  \advance\bsubfloatcount\@ne
  \@namedef{bsubfloat\romannumeral\bsubfloatcount}{%
    \subfloat[#1]{\vbox to\bsubfloatht{\hbox{#2}\vfill}}}%
}
\newcommand{\resetbsubfloat}{\bsubfloatcount\z@\bsubfloatht=\z@}
\newcommand{\ProblemFormat}[1]{{\sc #1}}
\newcommand{\ProblemName}[1]{\ProblemFormat{#1}\xspace}
\newcommand{\probThreeSAT}[0]{\ProblemName{$3$-SAT}}
\newcommand{\probFourSAT}[0]{\ProblemName{$4$-SAT}}
\newcommand{\probThreeCol}[0]{\ProblemName{$3$-Coloring}}
\newcommand{\probPlanarThreeSAT}[0]{\ProblemName{Planar $3$-SAT}}
\newcommand{\probPlanarOneInThreeSAT}[0]{\ProblemName{Planar $1$-in-$3$-SAT}}
\newcommand{\probrFix}{\ProblemName{$r$-Fix}}
\newcommand{\probrFixPromise}{\ProblemName{$r$-Fix-Promise}}
\newcommand{\probrFixSwap}{\ProblemName{$r$-Swap}}
\newcommand{\probrFixSwapPromise}{\ProblemName{$r$-Swap-Promise}}
\newcommand{\probFix}{\ProblemName{Fix}}
\newcommand{\probFixSwap}{\ProblemName{Swap}}
\newcommand{\probPlanarrFix}{\ProblemName{Planar $r$-Fix}}
\newcommand{\probThreeCV}[0]{\ProblemName{$3$-Swap-Promise}}
\newcommand{\probThreeSwapPromise}[0]{\ProblemName{$3$-Fix-Promise}}
\newcommand{\probPlanarThreeCV}{\ProblemName{Planar $3$-Swap-Promise}}
\newcommand{\probPrExt}{\ProblemName{Precoloring Extension}}
\newcommand{\probkPrExt}{\ProblemName{$r$-Precoloring Extension}}
\newcommand{\probPrExtShort}{\ProblemName{PrExt}}
\newcommand{\probkPrExtShort}{\ProblemName{$r$-PrExt}}
\newcommand{\probIndSet}{\ProblemName{Independent Set}}
\newcommand{\probClique}{\ProblemName{Clique}}
\newcommand{\heading}[1]{\medskip\noindent{\bf #1.\ }}%
\newtheorem{theorem}{Theorem}
\newtheorem{lemma}[theorem]{Lemma}
\newtheorem{corollary}[theorem]{Corollary}
\newtheorem{conjecture}[theorem]{Conjecture}
\newtheorem{definition}[theorem]{Definition}
\title{On the Complexity of Restoring Corrupted Colorings\thanks{Work partially supported by the Emil Aaltonen Foundation (J.L).}}
\author{Marzio De Biasi\thanks{E-mail: \texttt{marziodebiasi@gmail.com}} \and Juho Lauri\thanks{Tampere University of Technology, Finland. E-mail: \texttt{juho.lauri@tut.fi}}}
\begin{document}

\maketitle

\begin{abstract}
In the \probrFix problem, we are given a graph $G$, a (non-proper) vertex-coloring $c : V(G) \to [r]$, and a positive integer~$k$. The goal is to decide whether a proper $r$-coloring $c'$ is obtainable from~$c$ by recoloring at most $k$ vertices of $G$. Recently, Junosza-Szaniawski, Liedloff, and Rz{\k{a}}{\.z}ewski [SOFSEM 2015] asked whether the problem has a polynomial kernel parameterized by the number of recolorings~$k$. In a full version of the manuscript, the authors together with Garnero and Montealegre, answered the question in the negative: for every $r \geq 3$, the problem \probrFix does not admit a polynomial kernel unless $\NP \subseteq \coNP / \poly$. Independently of their work, we give an alternative proof of the theorem.
Furthermore, we study the complexity of \probrFixSwap, where the only difference from \probrFix is that instead of~$k$ recolorings we have a budget of~$k$ color swaps. We show that for every $r \geq 3$, the problem \probrFixSwap is $\W[1]$-hard whereas \probrFix is known to be FPT. Moreover, when~$r$ is part of the input, we observe both \probFix and \probFixSwap are $\W[1]$-hard parameterized by treewidth. We also study promise variants of the problems, where we are guaranteed that a proper $r$-coloring~$c'$ is indeed obtainable from~$c$ by some finite number of swaps. For instance, we prove that for $r=3$, the problems \probrFixPromise and \probrFixSwapPromise are $\NP$-hard for planar graphs. As a consequence of our reduction, the problems cannot be solved in $2^{o(\sqrt{n})}$ time unless the Exponential Time Hypothesis (ETH) fails.
\end{abstract}

\section{Introduction}
Computational models are sometimes too idealized, and do not capture all information available or relevant to a problem. Moreover, in a dynamic world, constraints change over time due to more information becoming available. A problem arising frequently in practice in e.g., scheduling~\cite{Marx2004} is graph coloring. An assignment of $r$ colors to the vertices of a graph $G=(V,E)$ is an \emph{$r$-coloring}. Formally, an $r$-coloring $c : V \to [r]$ is said to be \emph{proper} if $c(u) \neq c(v)$ for every $uv \in E$. In the graph coloring problem, the goal is to find the smallest $r$ for which a graph is $r$-colorable. This quantity is known as the \emph{chromatic number} of $G$, and is denoted by $\chi(G)$. Graph coloring is one of the most central problems in discrete mathematics and optimization. For a general introduction to the topic, we refer the reader to the book~\cite{Jensen2011}. 

Suppose we have a graph $G$ for which we have computed a proper vertex-coloring using $\chi(G)$ colors. Then due to constraints changing, a malicious adversary, or e.g., a system failure, the coloring is corrupted by redistributing the vertex colors over the graph $G$. How hard is it to restore the corrupted coloring to some optimal proper coloring of $G$? In 2006, Clark, Holliday, Holliday, Johnson, Trimm, Rubalcaba, and Walsh~\cite{Clark2006} introduced and investigated this problem under the name \emph{chromatic villainy}. For restoring a corrupted coloring, they used the following operation called a \emph{swap}. A swap between two distinct vertices $u$ and $v$ is the operation of assigning to $u$ the color that appears on $v$, and vice versa. Formally, let $c$ be a vertex-coloring of a graph $G$. The \emph{villainy} of $c$, denoted by $B(c)$, is the minimum number of swaps to be performed to transform $c$ into some proper vertex-coloring of $G$. The quantity $B(c)$ can also be seen as the minimum number of recolorings with the constraint that each color used in the new coloring $c'$ must be used the same number of times as in $c$. 
In addition to the graph-theoretic viewpoint, there has been interest in the computational aspects of chromatic villainy~\cite{Regs2013}. 

Very recently, Junosza-Szaniawski, Liedloff, and Rz{\k{a}}{\.z}ewski~\cite{Junosza2015} studied a problem they call \probrFix. In the problem, we are given a graph $G=(V,E)$, a non-proper $r$-coloring $c$ of $V(G)$, and an integer $k$. The task is to decide whether a proper vertex-coloring $c'$ is obtainable from $c$ using at most $k$ vertex recolorings. The authors observe the problem is $\NP$-complete, and give several positive complexity results as well. In particular, using the framework of Bj{\"o}rklund, Husfeldt, and Koivisto~\cite{Bjorklund2009} they obtain a $O^*(2^n)$-time and exponential space algorithm, where $n$ is the number of vertices in the input graph. Furthermore, they show that for any fixed $r$, the problem is fixed-parameter tractable (FPT) parameterized by the number of recolorings $k$. Finally, in the same paper, the authors show that for graphs of treewidth $t$, the problem can be solved in $O(nr^{t+2})$ time. For a discussion on several related reoptimization and reconfiguration problems, we refer the reader to~\cite{Junosza2015}. We also note that their results are considerably expanded in a full version of the manuscript currently under review~\cite{fix-journal}.

The main difference between \probrFix and chromatic villainy as defined by Clark~\emph{et al.}~\cite{Clark2006} is the basic operation used. In \probrFix it is a recoloring, whereas in chromatic villainy it is a swap. For this reason, we shall refer to the computational problem arising from chromatic villainy as \probrFixSwap. That is, the input to \probrFixSwap is exactly the same as it is in \probrFix, but instead of $k$ recolorings we have a budget of $k$ swaps. Strictly speaking, there is another difference. In fact, in chromatic villainy, the corrupted vertex-coloring is not an arbitrary one, but has an additional property that could possibly be exploited. Namely, the property is that by using some finite number swaps, a proper vertex-coloring is indeed obtainable from the given one. Additional promises of properties of a problem are captured by the notion of a \emph{promise problem}. Promise problems were introduced and studied by Even, Selman, and Yacobi~\cite{Even1984}, and they have several applications (for a survey, see~\cite{Goldreich2006}). In fact, it seems fair to argue that promise problems model real-world problems more accurately. Indeed, Goldreich~\cite{Goldreich2006} writes: ``... I contend that almost all readers refer to this notion when thinking about computational problems, although they may be often unaware of this fact''. Motivated by these facts, we also separate two additional problems, \probrFixPromise and \probrFixSwapPromise, for which the input is guaranteed to satisfy additional properties (a precise definition is provided in Section~\ref{sec:prelims}).

In this context, it is natural to investigate whether a hard problem becomes easy when the set of instances is restricted. A priori, it is unknown how the addition of a promise affects the computational complexity of a problem. For example, it is hard to decide whether a graph contains a Hamiltonian cycle, even if we are promised it contains at most one such cycle~\cite{Johnson1985}. In the other direction, one can efficiently find a satisfying assignment for an $n$-variable SAT formula that promises to have at least $2^n/n$ satisfying assignments. However, as shown by Valiant and Vazirani~\cite{Valiant1985}, it is hard to find an assignment even when the formula is promised to have exactly one solution.

\heading{Motivation}
The problems \probrFix and \probrFixSwap are tightly related to local search which is a core technique in solving combinatorial optimization and operations research problems in practice. In local search, one aims to improve upon the current solution by replacing it with a better solution in its neighborhood. Specifically, the neighborhood is defined by the set of allowed operations that modify the current solution. Plausibly, the larger the neighborhood, the less likely is the local search to get stuck in a local optimum. On the other hand, the allowed operations should not be too demanding to compute. In fact, there has been significant interest in applying methods from parameterized complexity to analyzing local search procedures (see e.g.,~\cite{Khuller2003,Krokhin2012,Fellows2012,Szeider2011,Berg2016}). 

The studied problems are also related to combinatorial reconfiguration problems, in particular \emph{$r$-coloring reconfiguration}~\cite{Johnson2014,Bonsma2014,Wrochna2014}. In this problem, we are given two proper $r$-colorings for a graph, and asked whether one can be transformed into the other by changing one color at a time, maintaining a proper coloring throughout. We argue that in the context of local search, the property of maintaining a proper $r$-coloring at each step can be relaxed: we are only interested in eventually arriving at a solution. To further motivate the use of promise conditions, we remark that there are $r$-coloring problems in which we know the sizes of the color classes (if an $r$-coloring exists). These include well-known problems arising from coding theory, such as partitioning of the $n$-dimensional Hamming space into binary codes with certain properties~\cite{Ostergard2004}.

\heading{Our results}
We continue the investigation of the complexity of restoring corrupted colorings. Specifically, we further study the complexity of \probrFix, under different basic operations and/or promise conditions.
\begin{itemize}

\item For Section~\ref{sec:param}, our main result is that for any fixed $r \geq 3$, the problem \probrFixSwap is $\W[1]$-hard parameterized by the number of swaps~$k$. Moreover, the same is true for \probrFixSwapPromise. This should be contrasted with the positive FPT result of Junosza-Szaniawski~\emph{et al.}~\cite{Junosza2015} for \probrFix. In addition, we observe both problems \probrFix and \probrFixSwap become $\W[1]$-hard for treewidth when the number of colors $r$ is part of the input. The constructions exhibit gadget ideas we use for the sections to follow.

\item In Section~\ref{sec:kernel}, we prove that under plausible complexity assumptions, \probrFix has no polynomial kernel parameterized by the number of recolorings~$k$, for every $r \geq 3$. We stress that while mentioned as an open problem in~\cite{Junosza2015}, the question was subsequently answered by Garnero, Junosza-Szaniawski, Liedloff, Montealegre, and Rz{\k{a}}{\.z}ewski in a full version~\cite{fix-journal} of~\cite{Junosza2015}. Our result was obtained independently of their work, and uses slightly different ideas.

\item Finally, in Section~\ref{sec:cv}, we consider the complexity of the promise variants of the problems (see Section~\ref{sec:prelims} for precise definitions). We show that for $r=3$, both \probrFixSwapPromise and \probrFixPromise are $\NP$-hard for planar graphs. Moreover, the problems cannot be solved in $2^{o(\sqrt{n})}$ time unless the Exponential Time Hypothesis fails. On the positive side, using known results, we derive an algorithm for the problem working in $2^{O(\sqrt{n})}$ time.
\end{itemize}

\section{Preliminaries}
\label{sec:prelims}

All graphs in this paper are simple and undirected. For graph-theoretic notion not defined here, we refer the reader to~\cite{Diestel2010}. We write $[n]$ to denote the set $\{1,2,\ldots,n\}$.

\subsection{Promises and problem statements}
A \emph{promise problem} is a generalization of a decision problem, where the input is guaranteed to belong to a restricted subset among all possible inputs~\cite{Goldreich2008}.

\begin{definition}[Promise problem] A promise problem is a pair of disjoint sets of strings $(S_Y, S_N)$, and their union $S_Y \cup S_N$ is called the \emph{promise set}. An algorithm $A$ decides a promise problem if for every $x \in S_Y, A(x) = 1$ and for every $x \in S_N, A(x)=0$; for strings that do not belong to the promise set $x \notin S_Y \cup S_N$ the algorithm $A$ must halt, but can answer arbitrarily.
\end{definition}
A promise problem is in $\PromiseNP$, the promise extension of $\NP$, if there exists a polynomial $p$ and a polynomial-time verifier $V$ such that for every $x \in S_Y$ there exists $y$ of length at most $p(|x|)$ such that $V(x,y)=1$ and for every $x \in S_N$ and every $y$ it holds that $V(x,y)=0$. For a more comprehensive treatment, we refer the reader to~\cite{Goldreich2008}.

We are then ready to formally define the problems studied in this work. For the promise variants, a coloring $c'$ is said to be a \emph{permutation} of a proper vertex-coloring $c$ if $c'$ can be obtained from $c$ by a finite number of swaps. In other words, the sizes of the color classes of $c'$ match those of an optimal proper coloring.
\begin{framed}
\noindent \probrFix \\
\textbf{Instance:} A graph $G=(V,E)$, an $r$-coloring $c : V \to [r]$, and a positive integer $k$. \\ 
\textbf{Question:} Can $c$ be made into a proper $r$-coloring of $G$ using at most $k$ recolorings?
\end{framed}

\begin{framed}
\noindent \probrFixPromise \\
\textbf{Instance:} A graph $G=(V,E)$, an $r$-coloring $c : V \to [r]$, and a positive integer $k$. \\ 
\textbf{Promise:} $\chi(G)=r$, and $c$ is a permutation of an optimal proper vertex-coloring of $G$. \\
\textbf{Question:} Can $c$ be made into a proper $r$-coloring of $G$ using at most $k$ recolorings?
\end{framed}

\noindent Note that the number of recolorings needed is precisely the minimum Hamming distance between the given coloring $c$ and a valid coloring $c'$ (if existing).

Similarly, we also define \probrFixSwap and \probrFixSwapPromise, where instead of at most $k$ recolorings we have a budget of at most $k$ swaps. 
\begin{framed}
\noindent \probrFixSwap \\
\textbf{Instance:} A graph $G=(V,E)$, an $r$-coloring $c : V \to [r]$, and a positive integer $k$. \\ 
\textbf{Question:} Can $c$ be made into a proper $r$-coloring of $G$ using at most $k$ swaps?
\end{framed}

\begin{framed}
\noindent \probrFixSwapPromise \\
\textbf{Instance:} A graph $G=(V,E)$, an $r$-coloring $c : V \to [r]$, and a positive integer $k$. \\ 
\textbf{Promise:} $\chi(G)=r$, and $c$ is a permutation of an optimal proper vertex-coloring of $G$. \\
\textbf{Question:} Can $c$ be made into a proper $r$-coloring of $G$ using at most $k$ swaps?
\end{framed}


\noindent At first glance, the promise conditions might seem to make the two problems \probrFixPromise and \probrFixSwapPromise similar: one could think that two recolorings correspond to a swap because if we recolor a vertex, then by the promise, the color must be reinserted elsewhere. However, it is easy to build graphs in which this does not hold. For example, consider a graph $G$ constructed from a triangle with the vertices $v_1$, $v_2$, $v_3$ colored $c_1$, $c_2$, $c_3$, respectively. Also, connect vertices to $G$ such that $v_1$ has three pendants (neighbours) colored $c_1$ and three pendants colored $c_2$; $v_2$ has three pendants colored $c_2$ and three pendants colored $c_3$; and $v_3$ has three pendants colored $c_3$ and three pendants colored $c_1$. Clearly, three recolorings are enough to get a proper coloring. Indeed, we color $v_1$ with $c_3$, $v_2$ with $c_1$, and $v_3$ with $c_2$, and are done. 
In contrast, in the swap variant, two swaps needed (e.g., swap colors on $v_1$ and $v_3$, and then colors on $v_3$ and $v_2$).

\subsection{FPT-reductions and (kernelization) lower bounds}
In this subsection, we briefly review the necessary basics of parameterized complexity.

\begin{definition}
Let $A,B \subseteq \Sigma^* \times \mathbb{N}$ be parameterized problems. A \emph{parameterized reduction} from $A$ to $B$ is an algorithm such that given an instance $(x,k)$ of $A$, it outputs an instance $(x',k')$ of $B$ such that
\begin{enumerate}
\item $(x,k)$ is a YES-instance of $A$ iff $(x',k')$ is a YES-instance of $B$,
\item $k' \leq g(k)$ for some computable function $g$, and
\item the running time is $f(k) \cdot |x|^{O(1)}$ for some computable function $f$.
\end{enumerate}
\end{definition}
In the \probClique problem, we are given a graph $G$ and an integer~$k$. 
The task is to decide whether $G$ contains a complete subgraph on~$k$ vertices.
The class of problems reducible to \probClique under parameterized reductions is denoted by $\W[1]$. 
We define hardness and completeness analogously to classical complexity, but assume parameterized reductions. 
That is, a problem is said to be \emph{$\W[1]$-hard} if \probClique (and thus each problem in $\W[1]$) can be reduced to it by a parameterized reduction. It is widely believed that $\FPT \neq \W[1]$.

Let us recall the well-known Exponential Time Hypothesis (ETH), which is often the assumption used for excluding the existence of algorithms that are considerably faster than e.g., brute-force.
\begin{conjecture}[Exponential Time Hypothesis~\cite{eth}]
There exists a constant $c > 0$, such that 
there is no algorithm solving \probThreeSAT in time $O^*(2^{cn})$, where $n$ is the number of variables.
\end{conjecture}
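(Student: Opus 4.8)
This statement is a \emph{conjecture}, not a theorem: ETH is a complexity-theoretic hardness \emph{assumption} that one adopts, and no proof is known (nor is any route to one expected with current techniques). Accordingly, a proof proposal here really amounts to explaining what establishing ETH would require and where every known approach breaks down. The first observation is that ETH is strictly stronger than $\P \neq \NP$: if \probThreeSAT were solvable in polynomial time, it would in particular run in $O^*(2^{cn})$ time for every $c > 0$, contradicting the conjecture. Thus any proof of ETH must at the very least settle the $\P$ versus $\NP$ question, which is itself wide open.

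Beyond resolving $\P \neq \NP$, ETH demands a \emph{quantitative} exponential lower bound: one must exhibit an absolute constant $c > 0$ below which no algorithm decides \probThreeSAT. The natural plan would be to prove such an unconditional lower bound on the running time of every \probThreeSAT algorithm; by the sparsification lemma of Impagliazzo, Paturi, and Zane, the $O^*(2^{cn})$ formulation is equivalent to ruling out $2^{o(n)}$-time algorithms, so one may work with whichever form is convenient.

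This is precisely where the plan stalls, and it is the main obstacle. Even the weaker consequence $\P \neq \NP$ is subject to the relativization barrier of Baker, Gill, and Solovay, which shows that no oracle-insensitive argument can separate the two classes; and the broader program of proving strong unconditional lower bounds runs into the natural-proofs barrier of Razborov and Rudich and the algebrization barrier of Aaronson and Wigderson. The difficulty is therefore not a gap in some specific argument but the absence of any known method capable of proving unconditional lower bounds of this strength. For this reason ETH is stated as a conjecture and is used in the sequel---notably in Section~\ref{sec:cv}---purely as a hypothesis, from which \emph{conditional} lower bounds, such as the absence of $2^{o(\sqrt{n})}$-time algorithms for \probrFixPromise and \probrFixSwapPromise, are derived.
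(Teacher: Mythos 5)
This statement is a conjecture, not a theorem: the paper offers no proof of ETH, but simply cites~\cite{eth} and uses it as a hardness assumption from which conditional lower bounds (e.g., Corollary~\ref{cor:planarcv_subexp}) are derived, and your proposal correctly recognizes exactly this. Your treatment matches the paper's; the only quibble is that the equivalence between the ``no $O^*(2^{cn})$ algorithm for some $c>0$'' and ``no $2^{o(n)}$ algorithm'' formulations is a definitional matter (via the infimum of admissible constants), whereas the sparsification lemma of Impagliazzo, Paturi, and Zane serves the different purpose of making the lower bound hold in terms of $n+m$ rather than $n$ alone.
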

Suppose $\varphi$ is an instance of \probThreeSAT with $n$ variables and $m$ clauses.
It holds that if there is a linear reduction from \probThreeSAT to, say, a graph problem $X$, then the problem $X$ cannot be solved in time $2^{o(n'+m')}$, where $n'$ and $m'$ denote the number of vertices and edges, respectively. Similar reasoning can be applied for $\W[1]$-hard problems. For instance, it is known that there is no $f(k) n^{o(k)}$-time algorithm for \probIndSet for any computable function $f$, unless ETH fails. Then the existence of a parameterized reduction with a linear parameter dependence from \probIndSet to a problem $X'$ implies a lower bound for $X'$ under ETH. For more examples and discussion, we refer the reader to~\cite{fptbook,Lokshtanov2011}.

Finally, let us then mention the machinery we use to obtain kernelization lower bounds later on (for more details, see~\cite{BodlaenderJK14,fptbook}). 

\begin{definition}
An equivalence relation $\mathcal{R}$ on $\Sigma^*$ is a \emph{polynomial equivalence relation} if (1) there is an algorithm that given two strings $x,y \in \Sigma^*$ decides whether $\mathcal{R}(x,y)$ in $(|x|+|y|)^{O(1)}$ time; and (2) for any finite set $S \subseteq \Sigma^*$ the equivalence relation $\mathcal{R}$ partitions the elements of $S$ into at most $(\max_{x \in S} |x|)^{O(1)}$ classes.
\end{definition}


\begin{definition}[Cross-composition]
Let $L \subseteq \Sigma^*$ and let $Q \subseteq \Sigma^* \times \mathbb{N}$ be a parameterized problem. We say $L$ \emph{cross-composes} into $Q$ if there is a polynomial equivalence relation $\mathcal{R}$ and an algorithm which, given $t$ strings $x_1,x_2,\ldots,x_t$ belonging to the same equivalence class of $\mathcal{R}$, computes an instance $(x^*,k^*) \in \Sigma^* \times \mathbb{N}$ in time polynomial in $\Sigma_{i=1}^{t} |x_i|$ such that (1) $(x^*,k^*) \in Q$ iff $x_i \in L$ for some $i \in [t]$; and (2) $k'$ is bounded polynomially in $\max_{i=1}^{t} |x_i| + \log t$.
\end{definition}


\begin{theorem}[\hspace{1sp}\cite{BodlaenderJK14}]
\label{thm:or-comp-containment}
Assume that an $\NP$-hard language $L$ cross-composes into a parameterized language $Q$. Then $Q$ does not admit a polynomial kernel, unless $\NP \subseteq \coNP / \poly$ and the polynomial hierarchy collapses.
\end{theorem}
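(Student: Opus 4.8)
The plan is to prove the contrapositive by a counting-and-bookkeeping argument that turns a hypothetical polynomial kernel for $Q$ into an \emph{OR-distillation} of the $\NP$-hard language $L$, which by the distillation lower bound of Fortnow and Santhanam (see also~\cite{fptbook}) forces $\NP \subseteq \coNP / \poly$ and hence a collapse of the polynomial hierarchy. Throughout I assume $Q \in \NP$, so that the ``OR-language'' $\mathrm{OR}(Q)$, consisting of all lists of $Q$-instances at least one of which is a YES-instance, again lies in $\NP$. Suppose then, toward a contradiction, that $Q$ has a polynomial kernelization $K$: a polynomial-time map sending an instance $(x,k)$ to an equivalent instance $(x',k')$ with $|x'| + k' \leq p(k)$ for some polynomial $p$.

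First I would build the distillation. We are given $t$ strings $x_1,\dots,x_t$, each of length at most $s$, and must output in polynomial time a single string $y$ with $|y| \leq \poly(s)$ (crucially independent of $t$) such that $y \in \mathrm{OR}(Q)$ iff $x_i \in L$ for some $i$. The first step removes duplicate strings: over a fixed alphabet there are only $2^{O(s)}$ distinct strings of length at most $s$, so by pigeonhole we may discard repeats and assume $t \leq 2^{O(s)}$, hence $\log t \leq O(s)$. Next I apply the polynomial equivalence relation $\mathcal{R}$ to sort the surviving instances into classes; by definition this is doable in polynomial time and produces at most $\poly(s)$ classes. To each class I feed the cross-composition algorithm, obtaining a single $Q$-instance $(x^*_j,k^*_j)$ whose parameter satisfies $k^*_j \leq \poly(s + \log t) = \poly(s)$ and with $(x^*_j,k^*_j) \in Q$ iff some member of class $j$ lies in $L$.

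The second step shrinks each composed instance. Applying $K$ to $(x^*_j,k^*_j)$ yields an equivalent instance of total size at most $p(k^*_j) = \poly(s)$. Concatenating these kernelized instances over all $\poly(s)$ classes gives a list $y$ of total length $\poly(s) \cdot \poly(s) = \poly(s)$, and by construction $y \in \mathrm{OR}(Q)$ exactly when some $x_i \in L$. This is precisely an OR-distillation of $L$ into the $\NP$ language $\mathrm{OR}(Q)$; composing it with a fixed Karp reduction from $\mathrm{OR}(Q)$ to the $\NP$-hard language $L$ (which changes sizes only polynomially) yields an OR-distillation of $L$ into itself. Invoking the Fortnow--Santhanam distillation lower bound then gives $\NP \subseteq \coNP / \poly$, and the polynomial-hierarchy collapse follows.

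The conceptually deep ingredient is of course the distillation lower bound itself, which I would cite rather than reprove. Within the reduction, the step I expect to be the main obstacle---and the reason the definitions of $\mathcal{R}$ and of the $\log t$-slack in cross-composition are shaped as they are---is the size accounting: I must guarantee that the final output depends polynomially on $s$ \emph{alone}, not on $t$. This is exactly where the two key bounds are spent: deduplication caps $\log t$ at $O(s)$ so that the composed parameter $k^*_j$ stays $\poly(s)$, and the $\poly(s)$ bound on the number of $\mathcal{R}$-classes keeps the concatenated list short. If either bound were superpolynomial the output would blow up and the hypothesis of the Fortnow--Santhanam lower bound would fail to be met.
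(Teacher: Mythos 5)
The paper never proves this statement itself---it is quoted verbatim from \cite{BodlaenderJK14}---so your proposal can only be measured against the argument in that reference, and your proposal is essentially that argument: remove duplicates so that $\log t = O(s)$, partition by the polynomial equivalence relation into $\mathrm{poly}(s)$ classes, cross-compose each class (parameter $\mathrm{poly}(s+\log t)=\mathrm{poly}(s)$), kernelize each composed instance down to $\mathrm{poly}(s)$ bits, concatenate, and invoke the Fortnow--Santhanam distillation lower bound. The size bookkeeping, which you correctly identify as the crux, is carried out properly at every step.

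There is, however, one genuine (if easily repaired) gap: your blanket assumption that $Q \in \NP$. The theorem as stated carries no such hypothesis---$Q$ is an arbitrary parameterized language---and your proof really uses the assumption, since you need $\mathrm{OR}(Q) \in \NP$ in order to Karp-reduce back to the $\NP$-hard language $L$ and obtain a self-distillation of $L$ before applying Fortnow--Santhanam. As written, you have therefore proved only a special case. The repair is to delete the detour: the distillation lower bound in the form used by \cite{BodlaenderJK14} (and stated in \cite{fptbook}) says that an OR-distillation of an $\NP$-hard language $L$ into an \emph{arbitrary} target language $R$ already yields $\coNP \subseteq \NP/\poly$; its covering argument only compares distillation outputs against a polynomial-size advice set of NO-outputs and never needs $R \in \NP$. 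Taking $R = \mathrm{OR}(Q)$, your construction ends one step earlier and proves the theorem in full generality. (For this paper's application the special case would in fact suffice, since \probrFix parameterized by $k$ lies in $\NP$, but the statement being proved is the general one.)
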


\section{Parameterized aspects of restoring corrupted colorings}
\label{sec:param}
Junosza-Szaniawski~\emph{et al.}~\cite{Junosza2015} focused on the \probrFix problem, that is, the number of colors in the coloring is fixed to be $r$. Among other results, they showed the problem is FPT parameterized by treewidth. In other words, the \probFix problem (same as \probrFix but $r$ is part of the input) is FPT for the combined parameter $(r+t)$, where $t$ is the treewidth of the input graph. In contrast, we observe the problem is $\W[1]$-hard when the parameter is only treewidth. 

In the \probPrExt problem (\probPrExtShort), we are given a graph $G=(V,E)$, a set $W \subseteq V$ of precolored vertices, and a precoloring $c : W \to [r]$ of the vertices in $W$. 
The goal is to decide whether there is a proper $r$-coloring $c'$ of $G$ extending the coloring $c$ (i.e., $c'(v) = c(v)$ for every $v \in W$). 
When $r$ is fixed, we call the problem \probkPrExt (\probkPrExtShort). 
Let us then proceed with the following observation.

\begin{lemma}
\label{lem:kprext_to_fix}
There exists a polynomial time algorithm which given an instance $I = (G,W,c,r)$ of \probPrExt constructs an instance $I' = (G',r',c',k')$ of \probFix, such that $I$ is a YES-instance of \probPrExt iff $I'$ is a YES-instance of \probFix. 
\end{lemma}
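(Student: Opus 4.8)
The plan is to force the precolored vertices of the \probPrExt instance to retain their colors in every budget-respecting solution of the \probFix instance, so that a cheap recoloring becomes essentially the same object as a precoloring extension. To that end I would build $G'$ from $G$ by attaching to each precolored vertex a bundle of \emph{color-blocking pendants}. Concretely, let $n = |V(G)|$ and fix $N = n+1$. For every $w \in W$ with $c(w) = i$ and every color $\ell \in [r] \setminus \{i\}$, add $N$ fresh degree-one vertices adjacent to $w$, each given color $\ell$. I set $r' = r$, define $c'$ to agree with $c$ on $W$, assign the pendants the colors just described, and color every vertex of $V \setminus W$ arbitrarily (say with color $1$). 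Finally I put $k' = |V \setminus W|$. The whole construction is clearly computable in polynomial time and $G'$ has polynomially many vertices.

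For the forward direction, suppose $c^*$ is a proper $r$-coloring of $G$ extending $c$. Keeping every pendant at its original color and setting $c''(v) = c^*(v)$ on $V(G)$ yields a coloring of $G'$; it is proper because $c^*$ is proper on $G$ and because each pendant of color $\ell$ sees only its neighbour $w$, which has color $c^*(w) = c(w) = i \neq \ell$. Since $c^*$ agrees with $c$ (hence with $c'$) on $W$ and the pendants are untouched, only vertices of $V \setminus W$ are recolored, so the Hamming distance to $c'$ is at most $k'$, and $I'$ is a YES-instance.

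The backward direction is where the real work lies, and it is the step I expect to be the main obstacle: I must show that no precolored vertex is recolored within the budget. Suppose $c''$ is a proper $r$-coloring of $G'$ at Hamming distance at most $k'$ from $c'$, and assume toward a contradiction that $c''(w) = \ell' \neq i$ for some $w \in W$. Each of the $N$ pendants of color $\ell'$ attached to $w$ now conflicts with $w$, and since such a pendant has $w$ as its only neighbour, the conflict can be resolved only by recoloring the pendant itself. This forces at least $N$ recolorings among those pendants, together with the recoloring of $w$, for a total of at least $N+1 = n+2 > |V \setminus W| = k'$ changes, contradicting the budget. Hence every $w \in W$ keeps its color, so $c''$ restricted to the induced subgraph $G$ is a proper $r$-coloring agreeing with $c$ on $W$, that is, a precoloring extension, and $I$ is a YES-instance of \probPrExt.

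The only quantitative point to get right is the interplay between $N$ and the budget: $N$ must exceed $k'$ so that recoloring even a single precolored vertex is unaffordable, while $k'$ must still be large enough to permit recoloring all free vertices in the forward direction. The choice $N = n+1$ and $k' = |V \setminus W|$ satisfies both $|V \setminus W| \leq k'$ and $k' < N+1$ simultaneously, and a routine check that $G'$ is simple completes the reduction.
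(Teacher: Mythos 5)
Your proposal is correct and follows essentially the same approach as the paper: attach to each precolored vertex $w$ a bundle of pendants in every color of $[r]\setminus\{c(w)\}$, in numbers exceeding the budget $k' = |V \setminus W|$, so that recoloring any vertex of $W$ is unaffordable. The only cosmetic difference is that you use $n+1$ pendants per forbidden color where the paper uses $k'+1$ (both exceed $k'$), and you spell out in detail the budget-counting argument that the paper leaves as an observation.
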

\begin{proof}
Let $I = (G,W,c,r)$ be an instance of \probPrExt. 
To obtain, in polynomial time, an instance $I' = (G',r',c',k')$ of \probFix, we proceed as follows. 
First, let $G' = G$, $r' = r$, and set the number of recolorings $k' = |V \setminus W|$.
Then to each precolored vertex $w \in W$, we attach $(r'-1) \cdot (k'+1)$ pendant vertices, called $P_w$.
Build $c'$ from $c$ as follows.
We color vertices in $P_w$ such that there are precisely $k'+1$ vertices colored in every color $c \in ([r'] \setminus \{ c(w) \})$.
We retain the colors on the vertices in $W$, and color each uncolored vertex with color $1$. 
Observe that $k'$ recolorings will not suffice to change the color of $w \in W$ as it has $k'+1$ pendants colored in each color distinct from $c(w)$.
Thus, it is easy to see $I = (G,W,c,r)$ is a YES-instance of \probPrExt iff $I' = (G',r',c',k')$ is a YES-instance of \probFix.
\end{proof}

To make the result hold for \probFixSwap, we add $r \cdot k'$ isolated vertices and color them so that we provide a choice of one of the $r$ colors for each of the $k'$ non-precolored vertices.
Finally, it is well-known the addition of vertices of degree at most~1 does not increase the treewidth of a graph. 
As \probPrExt is $\W[1]$-hard for treewidth~\cite{Fellows2011}, we obtain the following.
\begin{corollary}
Both problems \probFix and \probFixSwap are $\W[1]$-hard parameterized by treewidth.
\end{corollary}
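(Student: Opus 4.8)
The plan is to read the material preceding the statement as supplying two parameterized reductions, one for each problem, and to verify that each meets the three requirements of a parameterized reduction when the parameter is the treewidth of the input graph. Since \probPrExtShort is $\W[1]$-hard parameterized by treewidth~\cite{Fellows2011}, it suffices to exhibit such reductions from \probPrExtShort to \probFix and to \probFixSwap.

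For \probFix\ I would invoke Lemma~\ref{lem:kprext_to_fix} directly. That construction already runs in polynomial time and preserves YES/NO-instances, so the only remaining point is that the treewidth of the produced graph $G'$ is bounded by a function of the treewidth of $G$. This is immediate: the reduction only attaches pendant (degree-$1$) vertices to $G$, and a pendant is absorbed by appending, to any bag of a tree decomposition containing its neighbour, a fresh leaf bag of size two. Hence $\mathrm{tw}(G') = \max(\mathrm{tw}(G),1)$, a valid parameter bound. Combined with the $\W[1]$-hardness of \probPrExtShort\ for treewidth, this settles the claim for \probFix.

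For \probFixSwap\ I would take the same graph and augment it, as indicated in the text, with $r\cdot k'$ isolated vertices grouped into $k'$ palettes, one palette per non-precolored vertex, each palette holding exactly one vertex of every colour. The forward direction is then easy: any proper extension of the precoloring is realized by setting each non-precolored vertex to its target colour with a single swap against the matching palette vertex (and with no swap at all when the target is the colour~$1$ it already carries), so at most $k'$ swaps suffice. For the converse I would argue that within a budget of $k'$ swaps no precolored vertex can change colour in a proper coloring: altering some $w \in W$ would free one of its colours onto a pendant and leave $w$ conflicting with the remaining $k'+1$ same-coloured pendants, whose repair alone exhausts the budget. Since isolated vertices contribute nothing to properness and raise the treewidth by nothing, $\mathrm{tw}(G')$ is again $\max(\mathrm{tw}(G),1)$, so the map is a parameterized reduction for treewidth.

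The step I expect to be most delicate is the converse analysis in the swap model, because a swap conserves the multiset of colours and therefore does not act on vertices independently the way a recoloring does. The crux is the \emph{tightness} of the budget: $k'$ equals the number of non-precolored vertices, so every swap must be spent productively on making one of them properly coloured, which is precisely what forbids disturbing the locked precolored core. I would make this budget accounting explicit, in particular ruling out swaps that merely shuffle colours among the pendants or among the palettes in an attempt to defeat the lock, and then conclude the corollary.
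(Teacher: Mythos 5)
Your proposal is correct and follows essentially the same route as the paper: invoke Lemma~\ref{lem:kprext_to_fix} for \probFix, add $r \cdot k'$ isolated palette vertices for \probFixSwap, and observe that pendant and isolated vertices do not increase treewidth, so $\W[1]$-hardness transfers from \probPrExt~\cite{Fellows2011}. Your explicit budget-accounting argument for the swap variant (the locked precolored core and the tightness of $k'$) is merely a fleshed-out version of what the paper leaves implicit.
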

Because \probPrExt is $\NP$-complete when restricted to distance-hereditary graphs~\cite{Bonomo2008} (and thus for e.g., chordal graphs), we immediately observe the following.
\begin{corollary}
Both problems \probFix and \probFixSwap are $\NP$-complete when restricted to the class of distance-hereditary graphs.
\end{corollary}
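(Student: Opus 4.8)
The plan is to inherit both the reduction and its correctness from Lemma~\ref{lem:kprext_to_fix} (together with the swap-variant extension described immediately after it), and to observe that this reduction stays within the class of distance-hereditary graphs. Since \probPrExt is $\NP$-complete on distance-hereditary graphs by~\cite{Bonomo2008}, this would yield $\NP$-hardness of both \probFix and \probFixSwap on that class, and the corollary follows once membership in $\NP$ is noted.

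First I would dispatch membership in $\NP$, which is routine and requires no restriction on the graph class: given an instance, a proper $r$-coloring $c'$ within Hamming distance $k$ of $c$ (respectively, reachable from $c$ by at most $k$ swaps) serves as a polynomial-size certificate, and verifying that $c'$ is proper and that the recoloring budget (or swap budget) is respected is a polynomial-time check. Hence both \probFix and \probFixSwap lie in $\NP$.

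The substance is the hardness direction, and the only point that needs checking beyond Lemma~\ref{lem:kprext_to_fix} is that the constructed graph $G'$ remains distance-hereditary whenever the input $G$ is. Here I would use the classical pruning characterization: a graph is distance-hereditary if and only if it can be built from a single vertex by repeatedly adding pendant vertices, true twins, and false twins. The reduction of Lemma~\ref{lem:kprext_to_fix} alters $G$ only by attaching $(r'-1)\cdot(k'+1)$ pendant vertices to each precolored vertex, and each such attachment is precisely a pendant-vertex addition; thus $G'$ is distance-hereditary whenever $G$ is. For the swap variant we additionally introduce $r \cdot k'$ isolated vertices, and these are equally harmless: the disjoint union of a distance-hereditary graph with isolated vertices is again distance-hereditary, as adding a separate trivial component changes neither distances within existing components nor the (absence of) induced paths between components. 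Note also that the vertex colorings assigned by the reduction play no role in the graph class, so they may be ignored for this argument.

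The main, and essentially the only, obstacle is confirming this closure property of distance-hereditary graphs under the two gadget operations used, namely pendant-vertex and isolated-vertex additions; this parallels the treewidth remark already made in the text. Once it is established, no fresh correctness argument is needed: the equivalence ``$I$ is a YES-instance iff $I'$ is a YES-instance'' is exactly the conclusion of Lemma~\ref{lem:kprext_to_fix} and its swap-variant, so the reduction transfers verbatim to the distance-hereditary setting and the corollary is complete.
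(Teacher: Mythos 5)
Your proposal is correct and follows exactly the paper's (implicit) argument: the paper derives this corollary directly from Lemma~\ref{lem:kprext_to_fix} and its swap-variant extension, relying on the fact that the reduction only attaches pendant vertices (and isolated vertices), operations under which distance-hereditary graphs are closed. Your write-up simply makes explicit the details (NP membership and the pendant/twin pruning characterization) that the paper states as an immediate observation.
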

The above also implies hardness for bounded cliquewidth graphs.

Junosza-Szaniawski~\emph{et al.}~\cite{Junosza2015} proved that for every fixed $r$, the problem \probrFix is FPT parameterized by the number of recolorings. However, when the basic operation is a swap instead of a recoloring, the problem becomes hard. This is established by the following lemma. For the result, we give a parameterized reduction from the well-known \probIndSet problem. In this problem are given a graph $G=(V,E)$, and an integer $k$.
The goal is to decide whether $G$ contains a set of $k$ pairwise non-adjacent vertices. 
The problem is well-known to be $\W[1]$-hard parameterized by~$k$.

\begin{lemma}
\label{lem:indset_to_rswap}
There exists a polynomial time algorithm which given an instance $I = (G,k)$ of \probIndSet constructs an instance $I' = (G',r,c,k')$ of \probrFixSwap for $r = 3$ such that $I$ is a YES-instance of \probIndSet iff $I'$ is a YES-instance of \probrFixSwap.
\end{lemma}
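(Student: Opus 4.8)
The plan is to give a parameterized reduction from \probIndSet, exactly as announced before the statement. Starting from an instance $I=(G,k)$ with $G=(V,E)$ and $|V|=n$, I would construct a $3$-colored graph $G'$ together with a coloring $c$ and a swap budget $k'$ that depends only on $k$ (I expect it to be linear, of the form $k' = g(k)$ with $g(k)=O(k)$), so that condition~(2) of a parameterized reduction holds by design. The construction itself is clearly polynomial, so condition~(3) is immediate; the substance of the proof is condition~(1), the biconditional between the existence of an independent set of size $k$ in $G$ and the existence of a proper recoloring of $G'$ reachable in at most $k'$ swaps.

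The core mechanism I would use is a \emph{selection gadget}, one per vertex $v_i\in V$, combined with making one designated color globally \emph{scarce}. To build this I reuse the pendant-locking idea from the proof of Lemma~\ref{lem:kprext_to_fix}: attaching $k'+1$ pendants of each forbidden color to a vertex pins its color, since no budget-respecting sequence of swaps can alter a vertex locked in this way. With such locked anchors I can equip each $v_i$ with a constant-size gadget whose only admissible repair is to pull the scarce color onto a representative vertex $x_i$, and I would arrange that exactly $k$ copies of the scarce color are present in the entire graph. Then at most $k$ representatives can be served, and the budget is spent precisely when $k$ of them are; this forces every solution to distinguish a set $S\subseteq V$ of \emph{selected} vertices with $|S|=k$.

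The second ingredient validates that $S$ is independent. For each edge $v_iv_j\in E$ I would attach a constant-size gadget, again anchored by locked vertices and calibrated against the default (unselected) coloring, such that the gadget stays properly colored so long as at most one of $x_i,x_j$ carries the scarce color, but admits no proper completion within the remaining budget once both $x_i$ and $x_j$ are selected. Keeping this gadget proper in the initial all-unselected coloring while still detecting the simultaneous selection of two neighbours is the part of the design that requires care, and I would exploit an orientation of the edges so that the two sides start from distinct colors. Granting such gadgets, the forward direction follows by translating an independent set of size $k$ into $k$ conflict-free scarce-color deliveries realizable in $k'$ swaps, and the backward direction by reading off $S$ from any $k'$-swap repair and arguing it must be independent.

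The main obstacle I anticipate is the \emph{tightness} of the swap count rather than the skeleton of the reduction. Because a swap is color-preserving and can correct two vertices simultaneously, I cannot treat swaps as independent recolorings, and I must rule out ``shortcut'' solutions that share work across gadgets or route the scarce color along unintended paths; this is precisely where the count of scarce-color copies, the number of locked pendants, and the value $k'$ have to be balanced against one another. Concretely, the delicate step will be a lemma stating that no proper coloring of $G'$ is reachable in at most $k'$ swaps unless its selected set is independent, which I would establish by a potential/counting argument that bounds how many previously conflicted vertices a single swap can repair. Finally, to transfer the statement to \probrFixSwapPromise, I would verify that the constructed $c$ can be taken to be a permutation of an optimal proper coloring and that $\chi(G')=3$, so that the very same instance meets the promise, and re-confirm $k'=O(k)$ to conclude $\W[1]$-hardness in both settings.
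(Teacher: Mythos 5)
Your high-level skeleton (pendant-locking to pin colors, an initially conflicted ``scarce'' resource that must be exported to exactly $k$ vertex representatives, a tight $O(k)$ swap budget, and an orientation of $E(G)$ to break symmetry) is essentially the skeleton of the paper's reduction, and reusing the locking idea from Lemma~\ref{lem:kprext_to_fix} is exactly right. But the proposal defers precisely the step that carries the proof, and in the form you specify that step cannot be carried out. You ask for a constant-size gadget, one per edge $v_iv_j \in E(G)$, anchored by locked vertices, that (a) stays proper whenever at most one of $x_i,x_j$ is selected and (b) is unrepairable within the leftover budget when both are. First, the gadget must need \emph{zero} internal swaps in case (a): a selected vertex $x_i$ has $\deg_G(v_i)$ incident edge gadgets, so any positive per-gadget repair cost makes the total number of swaps scale with the degrees of the selected vertices, which is not bounded by any function of $k$ and cannot even be fixed in advance as a single value $k'$, since it depends on which set gets selected. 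But a gadget whose internal colors never change cannot satisfy (b): for it to be proper in the states (selected, unselected), (unselected, selected) and (unselected, unselected), every internal neighbor of $x_i$ must avoid both the unselected color of $x_i$ and the scarce color, and likewise for $x_j$; these constraints force the gadget to be proper in the state (selected, selected) as well. The only escape is a direct edge $x_ix_j$, and that requires the unselected colors of the representatives to properly color $G$ itself --- impossible in general, e.g.\ for $G=K_n$ with $k=1$, which is a YES-instance of \probIndSet while the resulting $G'$ would contain $K_n$ and admit no proper $3$-coloring at all.

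The paper escapes this trap by making detection \emph{per-vertex} rather than per-edge: each $u_i$ (colored $1$) gets a private triangle $\{u^i_a,u^i_b,u^i_c\}$ with $u^i_a$ pendant-locked, selection recolors $u_i$ to $2$ and creates a conflict with $u^i_b$, whose only affordable repair is the single swap of $u^i_b$ with $u^i_c$; the vertex $u^i_c$ --- one vertex, one swap, regardless of degree --- is made adjacent to $u_j$ for every edge $u_iu_j\in E(G)$ with $j>i$. A conflict then survives iff some neighbor of $u_i$ is also selected, giving the exact budget $k'=2k$ ($k$ selection swaps plus $k$ cascade swaps). Routing the adjacency test through the unique cascade vertex of the selection gadget, under the orientation $j>i$, is the missing content of what you call the ``part of the design that requires care,'' and without it the proposal does not yield the lemma. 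The same applies to your backward direction, which you leave as a stub (``a potential/counting argument''); in the paper it is a concrete case analysis made possible by additional decoy triangles $t^i_{a,j},t^i_{b,j},t^i_{c,j}$ attached to each $u_i$, which ensure that any repair of a defect triangle other than swapping its color-$2$ vertex with some $u_i$ costs three swaps instead of two and hence overshoots $k'$.
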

\begin{proof}
Let $V(G) = \{ u_1,u_2,\ldots,u_n \}$.
To construct the graph $G'$, begin with $V(G') = V(G)$ and $k' = 2k$.
Add $k$ disjoint triangles $\{a_1,b_1,c_1\}, \ldots \{ a_k,b_k,c_k\}$, and color them such that $c(a_j) = 3$ and $c(b_j) = c(c_j) = 2$, for $j \in [k]$.
For each $i \in [n]$, add a disjoint triangle $C_i = \{ u^i_a, u^i_b, u^i_c \}$.
In each, color $c(u^i_a) = 1$, $c(u^i_b) = 2$, and $c(u^i_c) = 3$. 
Attach $2(k+1)$ pendant vertices to $u^i_a$ and color $k+1$ of them with color $2$, and $k+1$ of them with color $3$. 
For each $i$, add the edge $u_iu^i_b$.
For every $i,j \in [n]$, if $j > i$ and $u_iu_j \in E(G)$, add the edge $u_ju^i_c$.
Finally, for each $i \in [n]$, add $k+1$ disjoint triangles $\{ t^i_{a,1}, t^i_{b,1}, t^i_{c,1}\}, \ldots, \{ t^i_{a,k+1}, t^i_{b,k+1}, t^i_{c,k+1}\}$. 
These $k+1$ disjoint triangles are colored such that $c(t^i_{a,j}) = 3$, $c(t^i_{b,j}) = 2$, and $c(t^c_{a,j}) = 1$, where $j \in [k+1]$.
Add the edge $u_it^i_{a,j}$, for $i \in [n]$ and $j \in [k+1]$.
This completes the construction of $G'$.
An example is shown in Figure~\ref{fig:indset-to-swap}.
Let us then prove $I = (G,k)$ is a YES-instance of \probIndSet iff $I' = (G',r,c,k')$ is a YES-instance of \probrFixSwap.

\begin{figure}[t]
\centering
\includegraphics[angle=270,width=\textwidth]{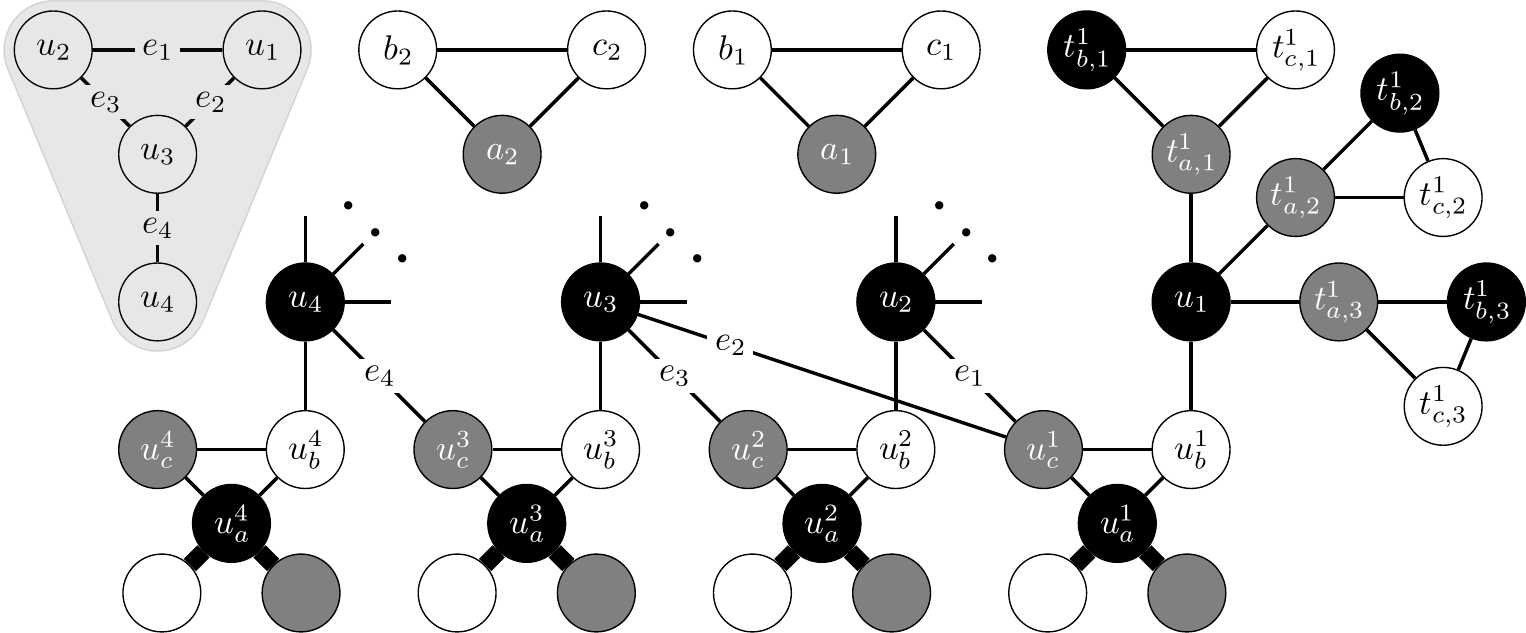}
\caption{An instance $(G,k)$ of \probIndSet with $k=2$ (highlighted in gray) transformed to an instance of \probrFixSwap. Thick edges at the bottom correspond to $k+1$ pendant vertices. A gadget has been expanded for $u_1$: the three dots hide a similar gadget for each of $u_2$, $u_3$, and $u_4$. The colors are ``\protect\tikz[baseline=0.25ex] \protect\fill[draw,fill=black] (1ex,1ex) circle (1ex); $= 1$'', ``\protect\tikz[baseline=0.25ex] \protect\fill[draw,fill=white] (1ex,1ex) circle (1ex); $= 2$'', and ``\protect\tikz[baseline=0.25ex] \protect\fill[draw,fill=gray] (1ex,1ex) circle (1ex); $= 3$''.}
\label{fig:indset-to-swap}
\vspace{-0.25cm}
\end{figure}

Let $S = \{ s_1,\ldots,s_k\} \subseteq V(G)$ be an independent set.
By construction, there are exactly $k$ conflicts in $G'$ between $b_j$ and $c_j$, for $j \in [k]$.
Swap $c_j$ with $u_j$, spending a total of $k$ swaps.
By doing so, we fixed $k$ conflicts but also introduced $k$ new conflicts.
In particular, the new conflicts are between $u_i$ and $u^i_b$, as they are both colored $2$.
We swap $u^i_b$ with $u^i_c$ (colored $3$), and claim $G'$ is properly colored.
By construction, $u^i_c$ is adjacent to $u \in V(G)$ only if $u$ and $u_i$ are adjacent in $V(G)$.
As $S$ is an independent set, each $u \in V(G)$ adjacent to $u^i_c$ is colored $1$.
Thus, $c$ is a proper coloring for $G'$, and we are done. 

For the other direction, suppose $k' = 2k$ swaps suffice to obtain a proper vertex coloring from $c$.
Again, each of the $k$ conflicts between $b_j$ and $c_j$, for $j \in [k]$, must be fixed.
To resolve the conflicts, we must swap either $b_j$ or $c_j$ with a vertex colored $1$.
Without loss, suppose we choose $c_j$ over $b_j$.
Observe that for every $i \in [n]$, we cannot swap $u^i_a$ with any vertex as it has $k+1$ pendant vertices colored $2$, and $k+1$ pendant vertices colored $3$.
Thus, there are two possibilities: either we swap $c_j$ with $u_i$, or with $t^i_{a,j}$, for some $i \in [n]$ and $j \in [k+1]$.
If the swap occurs between $c_j$ and $t^i_{a,j}$, we must then swap $t^i_{b,j}$ with $u_i$ for some $i \in [n]$.
This introduces a conflict between the chosen $u_i$ and its adjacent vertex $u^i_b$.
After we fix this conflict, we have used a total of 3 swaps, totaling $3k > k'$ swaps if we followed this strategy for each $c_j$.
Thus, as $k' = 2k$ swaps suffice, and we are looking for an independent set of size exactly~$k$, we must swap each of the vertices $c_j$ with some $u_i$.
Clearly, two vertices $u_i, u_\ell$, for $i,\ell \in [n]$, adjacent in $G$ cannot be used to fix the conflicts, for otherwise we would have a conflict between the vertices $u^i_b$ and $u^\ell_b$ (both colored $2$).
We conclude that the vertices $u_i$ a vertex $c_j$ is swapped with form an independent set.
\end{proof}
By adding a properly colored $r$-clique to the constructed graph, we can extend the lemma to cover every fixed value of $r$. Thus, we have the following.
\begin{theorem}
\label{thm:w1-rswap}
For every $r \geq 3$, the problem \probrFixSwap is $\W[1]$-hard parameterized by the number of swaps~$k$. Furthermore, there is no $f(k) n^{o(k)}$-time algorithm for the problem unless ETH fails, where $f$ is a computable function.
\end{theorem}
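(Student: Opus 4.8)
The plan is to assemble Theorem~\ref{thm:w1-rswap} from Lemma~\ref{lem:indset_to_rswap} together with two routine extensions. First I would observe that the construction in Lemma~\ref{lem:indset_to_rswap} is already a bona fide parameterized reduction for the case $r = 3$: it runs in polynomial time, and the output parameter $k' = 2k$ is bounded by a computable (indeed linear) function of~$k$. Combined with the $\W[1]$-hardness of \probIndSet parameterized by~$k$, recalled in Section~\ref{sec:prelims}, this immediately yields $\W[1]$-hardness of \probrFixSwap for $r = 3$.

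Second, to cover every fixed $r \geq 3$, I would augment the graph $G'$ produced by the lemma with a disjoint clique on $r$ vertices, properly colored by the colors $1, 2, \ldots, r$. Since this clique is already proper with all colors distinct, swapping any clique vertex $v_a$ (colored~$a$) with an outside vertex of color $b \neq a$ would give $v_a$ the color~$b$; but the clique also contains a vertex colored~$b$ adjacent to $v_a$, so the swap creates a monochromatic clique edge. Hence no optimal solution ever disturbs these vertices, and they serve only to witness that all $r$ colors appear. The swap budget stays $k' = 2k$, so the YES/NO equivalence of Lemma~\ref{lem:indset_to_rswap} carries over verbatim, establishing $\W[1]$-hardness for all fixed $r \geq 3$.

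Finally, for the running-time lower bound I would invoke the fact that \probIndSet admits no $f(k)\,n^{o(k)}$-time algorithm unless ETH fails. The crucial feature is that the reduction has \emph{linear} parameter dependence, $k' = 2k$, so that $o(k') = o(k)$. Composing a hypothetical $f(k')\,(n')^{o(k')}$-time algorithm for \probrFixSwap with the reduction would then solve \probIndSet in $g(k)\,(n')^{o(k)}$ time, and this contradicts ETH provided $(n')^{o(k)}$ collapses to $n^{o(k)}$.

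The single step demanding genuine care, rather than mechanical bookkeeping, is verifying that the constructed graph $G'$ has size polynomial in~$n$. The construction adds $\Theta(nk)$ pendants and $\Theta(nk)$ auxiliary triangles on top of the $n$ original vertices; assuming the standard normalization $k \leq n$, this gives $n' = n^{O(1)}$, so that $(n')^{o(k)} = n^{o(k)}$ and the ETH transfer goes through. Everything else is a direct restatement of the equivalence already proved in Lemma~\ref{lem:indset_to_rswap}.
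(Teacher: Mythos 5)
Your proposal is correct and follows essentially the same route as the paper: it uses Lemma~\ref{lem:indset_to_rswap} as a parameterized reduction from \probIndSet (with linear parameter dependence $k' = 2k$), adds a disjoint properly colored $r$-clique to cover every fixed $r \geq 3$, and transfers the $f(k)\,n^{o(k)}$ ETH lower bound for \probIndSet through the reduction. If anything, your justification that no budget-respecting solution can usefully touch the added clique is more explicit than the paper's one-sentence remark.
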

In addition, it is straightforward to extend the construction of Lemma~\ref{lem:indset_to_rswap} to show the promise variant, namely \probrFixSwapPromise, is $\W[1]$-hard parameterized by the number of swaps~$k$.
\begin{corollary}
\label{cor:rfixswappromise-w1hard}
For every $r \geq 3$, the problem \probrFixSwapPromise is $\W[1]$-hard parameterized by the number of swaps~$k$. Furthermore, there is no $f(k) n^{o(k)}$-time algorithm for the problem unless ETH fails, where $f$ is a computable function.
\end{corollary}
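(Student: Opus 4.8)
The plan is to reuse the graph $G'$ and coloring $c$ built in Lemma~\ref{lem:indset_to_rswap} (for $r=3$), to note that the promise comes for free on the YES-instances, and then to \emph{enforce} the promise on the NO-instances with a single extra component. First I would record that $\chi(G')=3$ (it contains triangles and is $3$-colorable) and that the swap sequence exhibited in Lemma~\ref{lem:indset_to_rswap} turns $c$ into a proper $3$-coloring using only swaps. Since swaps preserve the sizes of the color classes, $c$ is then a permutation of that proper $3$-coloring, which is optimal; hence on YES-instances $c$ is already a permutation of an optimal coloring and the promise holds automatically.

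The main obstacle is the NO-instances: a priori nothing guarantees that $G'$ has \emph{any} proper $3$-coloring whose color classes match those of $c$, so such instances may lie outside the promise set and are then useless for a reduction to a promise problem. A short rigidity analysis of the gadgets shows the difficulty is genuine. Because every triangle must carry all three colors and each color-$1$ vertex ($u_i$, $u^i_a$, $t^i_{c,\ell}$) has a color-$2$ neighbor, the only way to realize the ``$-k$ color~$1$, $+k$ color~$2$'' shift forced by repairing the $k$ conflicting selector triangles is to recolor $k$ of the $u_i$ to color~$2$, and the cross-edges $u_j u^i_c$ make the set of such $u_i$ an independent set of $G$. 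Thus $G'$ admits a proper $3$-coloring of the right multiset essentially \emph{iff} $G$ has an independent set of size $k$, so the promise really does fail on NO-instances.

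To repair this I would add, as a disjoint component, a complete bipartite ``converter'' $H=K_{p,q}$ with $p=2k+1$ and $q=k+1$, coloring one side entirely with color~$1$ and the other entirely with color~$2$; this keeps $c$ proper on $H$. The key feature is that, avoiding color~$3$, the only proper colorings of $H$ are the given one and its flip (side $1\leftrightarrow 2$), whose color-class sizes differ by exactly $(-k,+k,0)$, and by complete bipartiteness no ``partial flip'' is proper, so $H$ behaves in an all-or-nothing way. Consequently a proper $3$-coloring of $G'\cup H$ with the same multiset as $c$ \emph{always} exists, independently of $G$: repair all $k$ selector triangles (each contributing $(+k,-k,0)$ in total) and flip $H$ to compensate. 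This makes every produced instance satisfy the promise ($\chi=3$ and $c$ a permutation of an optimal coloring).

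It then remains to check that the budget $k'=2k$ still separates YES from NO. For YES-instances the independent-set route of Lemma~\ref{lem:indset_to_rswap} costs exactly $2k$ swaps and ignores $H$. For NO-instances one argues that $H$ contributes either $0$ or $(-k,+k,0)$ to the multiset, while each main-part conversion costs at least two swaps and demands an independent set; hence any valid coloring either repairs all selectors through the (nonexistent) independent set or uses the flip of $H$. The latter forces recoloring all of $H$, so the differing set $D$ has $|D|=4k+2$, and by the elementary bound $(\text{number of swaps})\ge |D|/2$ this costs $p=2k+1>2k$ swaps. Thus the minimum is $2k+1>k'$ on NO-instances. The residual work is routine: exclude cheaper ``color-$3$ detour'' colorings by the same gadget rigidity as in Lemma~\ref{lem:indset_to_rswap}; observe that $k'=2k$ is linear in $k$ and the construction is polynomial, so $\W[1]$-hardness and the $f(k)\,n^{o(k)}$ lower bound transfer from \probIndSet exactly as in Theorem~\ref{thm:w1-rswap}; and extend from $r=3$ to every $r\ge 3$ by appending a disjoint properly colored $K_r$ (as in the proof of Theorem~\ref{thm:w1-rswap}), which leaves the swap analysis untouched. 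I expect the delicate point to be precisely the all-or-nothing behavior of $H$ together with the exclusion of color-$3$ detours, i.e.\ confirming that no valid coloring of a NO-instance lies within $2k$ swaps of $c$.
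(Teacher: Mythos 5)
Your overall plan coincides with the paper's own proof: keep the instance $(G',r,c,k'=2k)$ of Lemma~\ref{lem:indset_to_rswap} intact and restore the promise by adding a disjoint component that can absorb the class-size shift $(+k,-k,0)$ forced by the $k$ selector triangles, while arguing that any repair routed through this component exceeds the budget. (The paper attaches $n$ subdivided stars $S'_{k+1}$ plus a properly colored clique; you attach a single $K_{2k+1,k+1}$.) Your promise-enforcement half is sound: with $H$ present, a proper $3$-coloring with the same class sizes as $c$ always exists, and $\chi=3$ is witnessed by the triangles, so every produced instance lies in the promise set.

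The genuine gap is exactly the point you defer as routine: excluding ``color-$3$ detours'' is not routine, and for your gadget it is false. A big-side vertex of $H$ has all of its neighbors in the small side, colored $2$, so recoloring it to color $3$ is conflict-free and costs a single change; meanwhile the Lemma~\ref{lem:indset_to_rswap} construction already contains free $3\to 2$ converters, namely the $k+1$ pendants of $u^1_a$ colored $3$, whose only neighbor is colored $1$. For each $j\in[k]$ pick a distinct big-side vertex $h_j$ and a distinct such pendant $p_j$, and consider the coloring that sets $c_j\mapsto 1$, $h_j\mapsto 3$, $p_j\mapsto 2$ and changes nothing else. It is proper, it has the same class sizes as $c$, and it differs from $c$ on $k$ disjoint triples, each carrying the color cycle $2\to 1\to 3\to 2$; each triple is realized by two swaps (swap $c_j\leftrightarrow h_j$, then $h_j\leftrightarrow p_j$ --- intermediate colorings need not be proper). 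Hence every instance you output, YES or NO, is solvable with exactly $2k=k'$ swaps, so NO-instances of \probIndSet map to YES-instances of \probrFixSwapPromise and the reduction collapses. Note this is not merely an unlucky choice of gadget: any promise-restoring component whose color-$1$ vertices can escape to color $3$ at unit cost breaks the same way, and, as stated, the paper's double-star centers (color $1$, all neighbors colored $2$) admit the identical escape, so the observation there that a touched star costs $k'+1$ extra swaps requires the same repair. A correct argument must pin the rebalancing gadget's vertices against \emph{both} remaining colors, or close the pendant leak itself (e.g., replace the free pendants of each $u^i_a$ by rigid properly colored triangles, whose color multiset cannot change); making one of these work is the real content of the corollary.
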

\begin{proof}
Assume the construction of an instance $I=(G',r,c,k')$ of \probrFixSwap of Lemma~\ref{lem:indset_to_rswap}. To prove the claim, it suffices to augment the construction to ensure the promise holds, i.e., that with a finite number of swaps we have that $G'$ is properly $r$-colored and $\chi(G') = r$.

A \emph{double star} $S'_n$ is the complete bipartite graph $K_{1,n}$ with each edge subdivided. Formally, $S'_n = (\{s\} \cup \{q_1,...,q_n\} \cup \{q'_1,...,q'_n\},  \{ (s,q_i) ,(q_i,q'_i) \mid i \in [n] \})$. To enforce the promise, add $n$ disjoint double stars $S'_{k+1}$ to $G'$. Each double star is colored such that the central vertex $s$ receives color 1, vertices $q_i$ adjacent to the central vertex color 2, and the remaining vertices $q'_i$ color 3. Observe that after swapping the central vertex $s$ (colored~1) of a $S_{k+1}$ with (say) $b_1$ (colored~2), we require $k'+1$ more swaps to fix the conflicts residing at that particular $S'_{k+1}$.
Nevertheless, given enough swaps, $k' \cdot (k'+1)$ to be precise, we can properly $r$-color $G'$. Finally, to guarantee $\chi(G') = r$, it suffices to add a disjoint properly $r$-colored clique.
\end{proof}

\section{No polynomial kernel for \probrFix}
\label{sec:kernel}
Junosza-Szaniawski~\emph{et al.}~\cite{Junosza2015} showed that for any fixed $r$, the problem \probrFix is FPT parameterized by the number of recolorings $k$. In particular, their result implies a kernel of exponential size for the problem. Thus, they asked whether or not there is a kernel of polynomial size. The question was answered in the negative in a full version of~\cite{Junosza2015} by Garnero~\emph{et al.}~\cite{fix-journal}.
Independently of their work, in what is to follow, we give an alternative proof of the theorem.

\begin{lemma}
For $r = 3$, the problem \probrFix parameterized by the number of recolorings $k$ does not admit a polynomial kernel unless $\NP \subseteq \coNP / \poly$.
\end{lemma}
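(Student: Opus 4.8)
The goal is to show \probrFix with $r=3$, parameterized by the number of recolorings $k$, admits no polynomial kernel unless $\NP \subseteq \coNP/\poly$. By Theorem~\ref{thm:or-comp-containment}, it suffices to exhibit a cross-composition from some $\NP$-hard language $L$ into \probrFix. The plan is to take $L$ to be an $\NP$-hard $3$-coloring-style problem — a natural choice is \probThreeCol itself, or an $\NP$-complete precoloring/list variant — and to OR-compose $t$ instances of $L$ into a single \probrFix instance whose parameter $k$ is bounded polynomially in $(\max_i |x_i| + \log t)$.

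\medskip\noindent\textbf{Setup of the equivalence relation and the composition.} First I would fix the polynomial equivalence relation $\mathcal{R}$ so that all instances in one class share the same number of vertices $n$ (and, if using a coloring-with-precoloring variant, the same precoloring pattern); malformed strings go into a single junk class. This makes the $t$ input instances $x_1,\ldots,x_t$ uniform, which is what lets the budget $k$ stay small. The crux of the construction is a \emph{selection gadget} that uses $\Theta(\log t)$ recolorings to ``switch on'' exactly one of the $t$ instances: I would encode the index $i \in [t]$ in binary using $O(\log t)$ selector vertices, so that a choice of how to recolor these selectors both identifies a block and makes the other $t-1$ blocks automatically irrelevant (e.g., by forcing their internal constraints to be trivially satisfiable or by decoupling them from the budget). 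Each block is a copy of the graph built from $x_i$ via the known $\NP$-hardness reduction for $L$ into $3$-coloring, rigged with an initial non-proper coloring $c$ so that block $i$ becomes properly $3$-colorable within the leftover budget precisely when $x_i \in L$.

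\medskip\noindent\textbf{Budget control via pendant lockdown.} The key technical tool, reused from Lemma~\ref{lem:kprext_to_fix}, is the pendant trick: to any vertex whose color must stay fixed under $k'$ recolorings, attach $(r-1)(k'+1)$ pendants colored so that changing that vertex would create more than $k'$ new conflicts. I would use this to ``freeze'' all vertices except a small controllable core, so that the only affordable recolorings are those that (a) set the $O(\log t)$ selectors and (b) recolor the $O(n)$ or so free vertices inside the single selected block that realize a proper $3$-coloring when $x_i \in L$. Setting $k' = O(\log t) + O(\mathrm{poly}(n))$, where $n = \max_i |x_i|$, then gives the required polynomial bound on the parameter in terms of $\max_i |x_i| + \log t$, which is condition~(2) of cross-composition.

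\medskip\noindent\textbf{Correctness and the main obstacle.} The forward direction is routine: if some $x_i \in L$, set the selectors to index $i$ and recolor the free core of block $i$ according to the proper $3$-coloring guaranteed by membership, staying within $k'$. The reverse direction is where the real work lies and is the step I expect to be the main obstacle: I must argue that \emph{any} solution using at most $k'$ recolorings is forced into the intended shape — it cannot waste recolorings fighting the frozen pendants, it must commit to a single block (partial or split commitments across two blocks should be ruled out by making cross-block repair provably cost more than $k'$), and a legal recoloring of the committed block must induce a proper $3$-coloring of the original $L$-gadget, hence a witness $x_i \in L$. Ensuring the selection gadget genuinely forces an ``exactly one block'' choice under the tight budget — rather than allowing a cheap way to satisfy several half-activated blocks — is the delicate point, and I would discharge it by a counting argument on conflicts relative to the frozen structure. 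Given the composition and Theorem~\ref{thm:or-comp-containment}, the no-polynomial-kernel conclusion for $r=3$ follows, and the result extends to every $r \geq 3$ by adjoining a properly colored $(r-3)$-augmenting clique as done elsewhere in the paper.
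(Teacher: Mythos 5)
Your high-level framework matches the paper's: cross-composition via Theorem~\ref{thm:or-comp-containment}, an instance selector costing $O(\log t)$ recolorings, pendant vertices that freeze colors, and a budget of the form $O(\log t) + \mathrm{poly}(n)$. However, there is a genuine gap exactly at the point you flag as ``the main obstacle,'' and it is not only that the selector is left unconstructed --- one of your stated design decisions is incompatible with the budget you need. You write that each block is ``rigged with an initial non-proper coloring $c$ so that block $i$ becomes properly $3$-colorable within the leftover budget precisely when $x_i \in L$.'' If every block carries an initial conflict, then every block requires at least one recoloring in any solution, so a YES-instance of the composed problem needs $k \geq t$ recolorings; this violates condition (2) of cross-composition, which requires $k$ to be polynomial in $\max_i |x_i| + \log t$. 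The composed instance must instead be properly colored everywhere except for a single forced conflict, and the solver's choice of how to resolve that conflict must itself constitute the selection. Moreover, for non-selected blocks to cost zero, every block must be properly $3$-colored initially --- which in particular forces every block's gadget graph to be $3$-colorable even when $x_i \notin L$, something the standard \probThreeSAT-to-\probThreeCol reduction does not provide.

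The paper resolves both difficulties with two concrete ideas missing from your proposal. First, each \probThreeSAT instance $\varphi_h$ is converted to a \probFourSAT instance $\varphi_h'$ by adding a fresh variable $u_h$ to every clause; since setting $u_h$ true satisfies $\varphi_h'$, every block's coloring gadget is $3$-colorable unconditionally, and the initial coloring (all variables true, $u_h$ true) is proper on every block. Second, the selector is not a binary counter but a ``spread gadget'': a complete binary tree whose internal vertices are replaced by triangles and whose root is frozen by $k+1$ pendants carrying the unique initial conflict of the whole graph. Resolving that conflict within budget forces it to propagate along exactly one root-to-leaf path (cost $2\log_2 t$), and at the chosen leaf the conflict can only be pushed onto $u_s$ (the other neighbor $w_s$ being frozen by its own pendants), flipping $u_s$ to false; the remaining budget $2n + 9m$ then suffices iff $\varphi_s$ itself is satisfiable. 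This single mechanism delivers both the ``exactly one block'' commitment and the zero cost of non-selected blocks, which your counting-argument placeholder would still have to reconstruct from scratch. (A smaller, separate issue: your extension to $r \geq 4$ by adding a disjoint properly colored clique does not work for the recoloring variant, since a disjoint clique does not prevent recoloring an original vertex with a fourth color; the paper instead attaches $(r-3)(k+1)$ suitably colored pendants to every vertex of the construction.)
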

\begin{proof}
We show that \probThreeSAT cross-composes into \probrFix parameterized by the number of recolorings $k$. By choosing an appropriate polynomial equivalence relation $\mathcal{R}$, we can assume we are given a sequence $\varphi_1,\varphi_2,\ldots,\varphi_t$ of \probThreeSAT instances with an equal number of variables, denoted by $n$, and an equal number of clauses, denoted by $m$. 

Let us then proceed with an cross-composition algorithm that composes $t$ input instances $\varphi_1$,$\varphi_2$,$\ldots,\varphi_t$ which are equivalent under $\mathcal{R}$ into a single instance of \probrFix parameterized by the number of recolorings. 
Specifically, we construct an instance $(G,k)$ of \probrFix, where $G$ is a vertex-colored graph, and $k$ the number of recolorings. 
Our plan is to convert each \probThreeSAT instance $\varphi_1,\varphi_2,\ldots,\varphi_t$ to an instance $\varphi'_1,\varphi'_2,\ldots,\varphi'_t$ of \probFourSAT. 
For each resulting instance of \probFourSAT, we apply the standard reduction from \probFourSAT to \probThreeCol (see e.g.,~\cite{Garey1976}). 
Finally, the resulting graphs are connected by a \emph{spread gadget}, which acts as an instance selector. 
Let us first describe the gadgets, and then the construction of the whole graph $G$.
At the same time, we describe a 3-coloring $c : V(G) \to [3]$.
We set $k = 2 \log_2(t) + 2n + 9m$.
Our construction depends crucially on $k$, and its choice will become apparent later on.

\heading{3-SAT to 4-SAT} 
For each \probThreeSAT formula $\varphi_h$, where $h \in [t]$, introduce a new variable $u_h$, and add it to each clause of $\varphi_h$. 
We call the resulting \probFourSAT formula $\varphi_h'$.
Observe that by setting $u_h$ to true we satisfy $\varphi_h'$.

\heading{The variable vertices} 
Let the $n$ variables of $\varphi_h$, where $h \in [t]$, be $x_{1,h},x_{2,h},\ldots,x_{n,h}$.
We introduce $n$ disjoint 2-cliques labelled $\{x_{1,h},\neg x_{1,h}\}, \ldots, \{ x_{n,h}, \neg x_{n,h} \}$. 
Set $c(x_{i,h}) = 2$ and $c(\neg x_{i,h}) = 1$, for $i \in [n]$ (i.e., initially we set each variable true).
Add an isolated vertex $u_h$, and let $c(u_h) = 1$.
We refer to each of the $2n+1$ vertices as \emph{variable vertices}.


\heading{The clause gadget}
Denote by $C_{h,j}$ the $j$th clause of $\varphi_h'$.
For each clause $C_{h,j}$, where $h \in [t]$ and $j \in [m]$, construct the following clause gadget $H_{h,j}$ (see Figure~\ref{fig:kernel_gadgets}~(b)). 
Take three disjoint triangles $\{a_j,b_j,y_{1,j}\}$, $\{c_j,d_j,y_{2,j}\}$, $\{y_{3,j},y_{4,j},r_j\}$, and add the edges $ y_{1,j}y_{4,j}$, $y_{2,j}y_{3,j}$, and $y_{5,j}r_j$. 
We add $k+1$ pendant vertices adjacent to $r_j$ and color them with color $1$. 
This guarantees $k$ recolorings cannot give $r_j$ color $1$.
Vertices $a_j$, $b_j$, $c_j$ and $d_j$ correspond to the 4 literals each clause has.
Thus, we connect them to the corresponding variable vertices.
That is, when $u \in \{ a_j,b_j,c_j,d_j \}$ corresponds to the variable $x_{i,h}$, $i \in [n]$, we add the edge $ux_{i,h}$ (and similarly when its negated).
The following properties hold for a clause gadget~$H_{h,j}$.
\begin{enumerate}
\item [$(P_1)$] If all four variable vertices of $H_{h,j}$ have color~1, then $r_j$ must have color~1 (costing $k+1$ recolorings to properly color the gadget).

\item [$(P_2)$] The gadget $H_{h,j}$ can be properly 3-colored if one of the attached variable vertices (including $u_h$) have color~2.

\end{enumerate}

\heading{The spread gadget} 
The spread gadget is constructed by starting from a complete binary tree on $t$ leaves $\ell_1,\ell_2,\ldots,\ell_t$ with the root $r$.
We replace each internal vertex with a triangle, and attach $k+1$ pendant vertices to $r$.
Thus, the distance from $r$ to any leaf is $2 \log_2(t)$. 
We color root $r$, its pendant vertices, and each leaf $\ell_1,\ell_2,\ldots,\ell_t$ with color $1$. 
In a triangle, the top vertex receives color~$1$, the right vertex color~$2$, and the left vertex color~$3$ (see Figure~\ref{fig:kernel_gadgets}~(b)). 
This finishes the construction of the spread gadget.

\begin{figure}
\begin{minipage}[b]{.5\linewidth}
\centering
\includegraphics[width=0.9\textwidth,keepaspectratio]{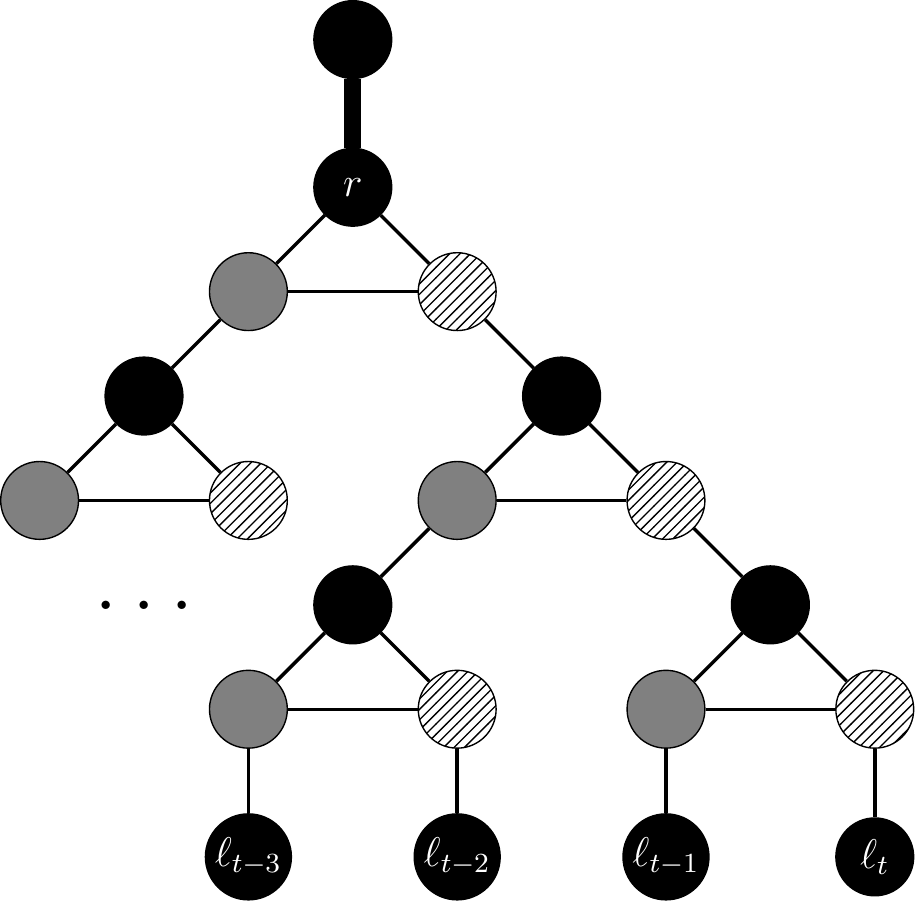}
\end{minipage}%
\begin{minipage}[b]{.5\linewidth}
\centering
\includegraphics[width=0.9\textwidth,keepaspectratio]{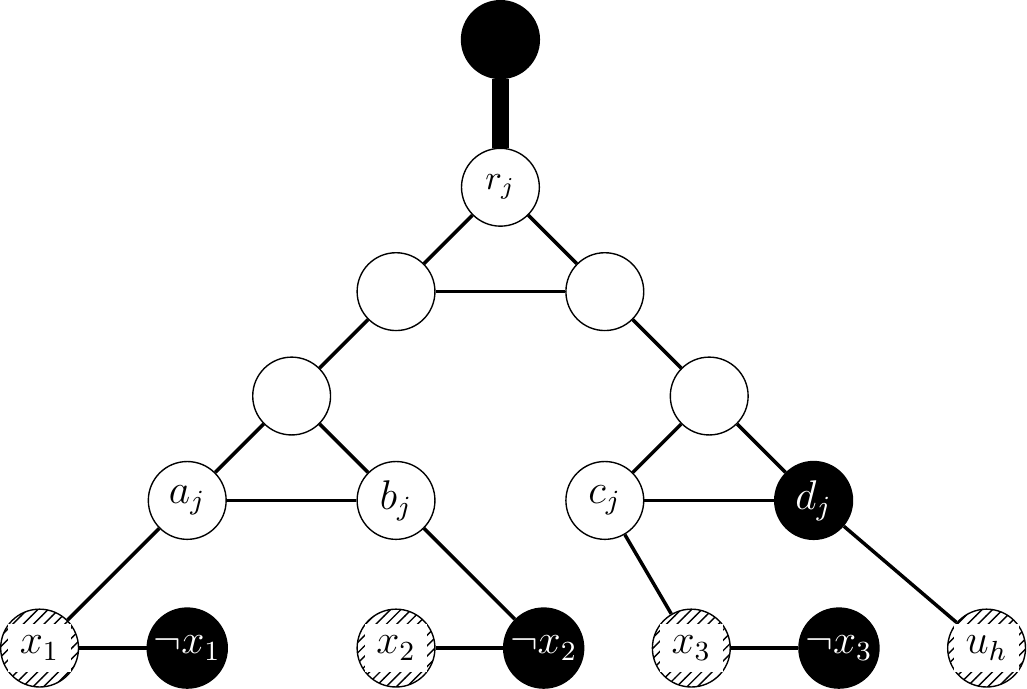}
\end{minipage}
\caption{\textbf{(a)} In the spread gadget, fixing the conflict at the root $r$ corresponds to choosing an instance $\ell_h$, for $h \in [t]$. \textbf{(b)} The clause gadget with its initial vertex-coloring, where the white vertices are colored such that $(P_1)$ and $(P_2)$ are respected. In both figures, thick edges correspond to $k+1$ pendant vertices. The colors are ``\protect\tikz[baseline=0.25ex] \protect\fill[draw,fill=black] (1ex,1ex) circle (1ex); $= 1$'', ``\protect\tikz[baseline=0.25ex] \protect\fill[draw,pattern=north east lines] (1ex,1ex) circle (1ex); $= 2$'', and ``\protect\tikz[baseline=0.25ex] \protect\fill[draw,fill=gray] (1ex,1ex) circle (1ex); $= 3$''.}
\label{fig:kernel_gadgets}
\vspace*{-0.2cm}
\end{figure}

To obtain $G$, we connect the described gadgets together as follows. 
For each $\varphi_h'$, where $h \in [t]$, we add a vertex $w_h$.
Make $w_h$ adjacent to each variable vertex, and set $c(w_h) = 3$.
We give $w_h$ altogether $2(k+1)$ pendant vertices, and color them so that $k+1$ of them have color $1$, and $k+1$ of them have color $2$.
This enforces $c(w_h) = 3$ if only $k$ recolorings are available.
In the spread gadget, each leaf $\ell_h$ is made adjacent to both $u_h$ and $w_h$.

This completes the construction of the graph $G$. 
Recall $k = 2 \log_2(t) + 2n + 9m$, and output the instance $(G,k)$ of \probrFix.
Let us then prove that $(G,k)$ is a YES-instance of \probrFix iff one of $\varphi_h$ is satisfiable for $h \in [t]$.

\heading{Correctness}
Suppose $(G,k)$ is a YES-instance of \probrFix.
By construction, the root $r$ and its $k+1$ pendant vertices are colored with color $1$ in the spread gadget.
As we only have a budget of $k$ recolorings, we must recolor $r$.
By doing so, we introduce a conflict into the triangle containing $r$.
When this conflict is fixed, we move it to one of the two succeeding triangles.
Further continuing to fix the conflict, we propagate it down to one of the leaves $\ell_s$, for some $s \in [t]$.
Intuitively, the propagation to $\ell_s$ means we have chosen to solve the instance $\varphi_s$.
By construction, $\ell_s$ forms a triangle with $u_h$ (colored $2$) and $w_h$ (colored $3$).
As $w_h$ has $k+1$ pendants colored with $1$ and $k+1$ pendants colored with $2$, we must move the conflict to $u_h$.
By moving the conflict from $r$ to $u_h$, we used precisely $2 \log_2(t)$ swaps.
Now, $u_h$ has color $1$, as do all the vertices in $D = \{ d_j \in V(C_{s,j}) \mid j \in [m] \}$.
As $c(u_h) = 1$, we have set the truth value of $u_h$ to false. 
Thus, the truth value of $\varphi_s$ is not affected by the truth value of $u_h$.
As $|D| = m$, each vertex in $D$ can be recolored in $m$ recolorings.
Moreover, $9m$ recolorings suffice to swap the color of each vertex in the three vertex-disjoint triangles a clause gadget has.
By construction, the initial vertex-coloring corresponds to a truth assignment $\tau = \{ z_1,z_2,\ldots,z_n \} \in \{ 1 \}^n$ setting each variable to true.
Clearly, $2n$ recolorings suffice to reverse $\tau$, i.e., change $\tau$ to $\tau'$ such that the Hamming distance of $\tau$ and $\tau'$ is $n$.
Therefore, by $(P_2)$, if $(G,k)$ is a YES-instance of \probrFix, then $\varphi_s$ is satisfiable.

For the other direction, suppose $\varphi_s$ is satisfiable for some $s \in [t]$.
Again, the initial vertex-coloring corresponds to a truth assignment $\tau = \{ z_1,z_2,\ldots,z_n \} \in \{ 1 \}^n$ setting each variable to true.
Using at most $2n+9m$ recolorings, we turn the initial vertex-coloring to a vertex-coloring corresponding to $\tau' = \{ z'_1,z'_2,\ldots,z'_n \} \in \{ 0,1 \}^n$ such that $\varphi_s$ is satisfied under $\tau'$.
Indeed, observe that as $\varphi_s$ is satisfiable, $(P_1)$ is not violated.
Moreover, observe that $\tau'$ satisfies $\varphi_s$ regardless of the truth value of $u_s$.
Thus, we can freely let $c(u_s) = 1$.
But now we introduce a conflict between $\ell_s$ (colored $1$) and its unique neighbor in the spread gadget.
However, using precisely $2 \log_2(t)$ recolorings, we propagate this conflict, and in particular the color $1$, up to the root $r$.
At most $k = 2 \log_2(t) + 2n + 9m$ recolorings have been used, and no conflicts remain in $G$.
Thus, if at least one of $\varphi_1,\ldots,\varphi_t$ is a YES-instance of \probThreeSAT, then $(G,k)$ is a YES-instance of \probrFix. 
This concludes the proof.
\end{proof}
In order to extend the above result to hold for every $r \geq 4$, we attach $(r - 3) \cdot (k+1)$ pendant vertices to each vertex of the construction. These pendant vertices are colored in the obvious way such that each ``original'' vertex must receive exactly one of the colors 1, 2, or 3.
\begin{theorem}[\hspace{1sp}\cite{fix-journal}]
For every $r \geq 3$, the problem \probrFix parameterized by the number of recolorings $k$ does not admit a polynomial kernel unless $\NP \subseteq \coNP / \poly$.
\end{theorem}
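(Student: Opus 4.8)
The plan is to bootstrap from the $r = 3$ case established in the preceding lemma, reusing its cross-composition essentially verbatim and then forcing the extra colors $\{4, \ldots, r\}$ to be unusable on the vertices that carry the logic of the reduction. Concretely, I would first run the cross-composition of the lemma on the (equivalent) instances $\varphi_1, \ldots, \varphi_t$ of \probThreeSAT to obtain the vertex-colored graph $G$ together with the budget $k = 2\log_2(t) + 2n + 9m$; call the vertices of $G$ the \emph{original} vertices. To each original vertex $v$ I would then attach a bundle of $(r-3)(k+1)$ fresh pendant vertices, colored so that each color in $\{4, \ldots, r\}$ occurs exactly $k+1$ times in the bundle of $v$, while leaving the coloring of the original vertices unchanged. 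Since $r$ is a fixed constant and $k$ is polynomial in $\max_i |\varphi_i| + \log t$, the resulting instance $(G', k)$ of \probrFix still has size polynomial in $\sum_i |\varphi_i|$, and the parameter is literally unchanged, so the size constraints required of a cross-composition remain satisfied.

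The crux is a single forcing observation: under a budget of $k$ recolorings, no original vertex can receive a color from $\{4, \ldots, r\}$. Indeed, recoloring an original vertex $v$ to some color $i \geq 4$ conflicts with the $k+1$ pendants of $v$ colored $i$, and repairing those conflicts costs at least $k+1$ further recolorings, so the total already exceeds $k$. Consequently, every proper $r$-coloring reachable from the initial coloring within $k$ steps keeps all original vertices inside $\{1,2,3\}$, and the pendants---being leaves colored with the extra colors and adjacent only to such original vertices---are automatically proper and never need to be touched. Thus the reachable solutions on $G'$ are in exact correspondence with the $3$-colorings of $G$ reachable within $k$ recolorings, and the forward and backward implications of the lemma carry over unchanged: $(G', k)$ is a YES-instance of \probrFix iff $(G,k)$ is a YES-instance of the $r=3$ problem, iff some $\varphi_h$ is satisfiable.

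Having exhibited a cross-composition of the \NP-hard language \probThreeSAT into \probrFix parameterized by the number of recolorings $k$, the claim follows from Theorem~\ref{thm:or-comp-containment}. The step I expect to require the most care is the forcing argument, and in particular ruling out ``cheap'' alternative solutions: one must check that the adversary cannot gain anything by recoloring pendants, by only partially repairing a bundle, or by shuffling conflicts between bundles, which is exactly what the multiplicity $k+1$ guarantees. The remaining points---the polynomial size bound, which relies crucially on $r$ being fixed rather than part of the input, and the invariance of the parameter---are routine.
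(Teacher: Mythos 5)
Your proposal is correct and is essentially identical to the paper's extension: the paper likewise attaches $(r-3)\cdot(k+1)$ pendant vertices to each vertex of the $r=3$ construction, colored with the extra colors so that no original vertex can receive a color outside $\{1,2,3\}$ within the budget of $k$ recolorings. Your write-up merely makes explicit the forcing argument and the cross-composition size/parameter checks that the paper leaves implicit.
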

Note that in the light of Theorem~\ref{thm:w1-rswap}, the existence of a kernel of any size depending only on~$k$ and~$r$ for \probrFixSwap is highly unlikely.

\section{Chromatic villainy: \probrFixSwapPromise is hard}
\label{sec:cv}
As the main result of this section, we prove that \probThreeCV is $\NP$-hard when restricted to the class of planar graphs. In other words, even with the additional information that some proper vertex-coloring is always obtainable after a finite number of swaps (and no other proper vertex-coloring with less than 3 colors exists), the problem remains hard.

Our reduction will be from the \probkPrExtShort problem, shown to be $\NP$-complete for $r=3$ when restricted to bipartite planar graphs by Kratochv{\'\i}l~\cite{Kratochvil1993}. In fact, although not explicitly stated, the following slightly stronger result is obtained from~\cite{Kratochvil1993}.
\begin{theorem}[\hspace{1sp}\cite{Kratochvil1993}]\label{thm:prext_hard}
The \probkPrExtShort problem is $\NP$-complete for $r=3$ when restricted to the class of bipartite planar graphs, and each precolored vertex has degree 1, that is, $\deg(w) = 1$ for every $w \in W$.
\end{theorem}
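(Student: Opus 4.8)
The plan is \emph{not} to reprove Kratochvíl's hardness from scratch, but to derive the strengthened statement from the standard result that \probkPrExtShort is $\NP$-complete for $r=3$ on bipartite planar graphs with precolored vertices of arbitrary degree. Membership in $\NP$ is immediate, since a proper $3$-coloring is a polynomial-size certificate. Hence the entire task reduces to exhibiting a polynomial-time, answer-preserving transformation that turns an arbitrary bipartite planar instance into an equivalent one in which every precolored vertex is a pendant, while preserving both planarity and bipartiteness.

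The key observation driving the transformation is that a precolored vertex $w$ with $c(w)=\alpha$ interacts with the rest of the graph \emph{only} through the constraint that each neighbor of $w$ must avoid color $\alpha$; its own color is fixed and contributes nothing else. Accordingly, I would delete every precolored vertex and, for each edge $wv$ joining a precolored vertex $w$ (of color $\alpha$) to a non-precolored vertex $v$, attach to $v$ a fresh pendant vertex precolored $\alpha$. Edges between two precolored vertices impose no constraint on the extension — they merely witness that the given precoloring is itself proper, i.e.\ $c(w)\neq c(w')$ — and are simply discarded, and isolated precolored vertices are dropped. In the resulting graph $G'$ the only precolored vertices are the new pendants, each of degree exactly $1$.

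First I would verify the two structural invariants. For planarity, note that the transformation is a \emph{vertex split}: each precolored $w$ is replaced by several degree-$1$ copies, one per incident edge, which can be placed in the rotation slots formerly occupied by $w$'s edges, and splitting a vertex into pendants never destroys a planar embedding. For bipartiteness, fix a $2$-coloring $(A,B)$ of $G$; every pendant is attached to a single vertex $v$ and can be assigned to the side opposite that of $v$, so the resulting $(A',B')$ is a valid bipartition of $G'$. Thus $G'$ is again bipartite and planar.

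The equivalence is then routine in both directions, and this is exactly where one must be careful that no spurious constraint is introduced or lost. Given a proper extension $\phi$ of $G$, coloring each non-precolored vertex by $\phi$ and each pendant by its precolor yields a proper extension of $G'$, because every neighbor $v$ of a former $w$ already satisfies $\phi(v)\neq c(w)$. Conversely, given a proper extension $\phi'$ of $G'$, restoring the deleted precolored vertices with their original colors is proper, since a pendant of color $\alpha$ on $v$ forces $\phi'(v)\neq\alpha$, reproducing precisely the constraint that the edge $wv$ imposed. The main — and essentially only — obstacle is this bookkeeping: confirming that the pendant gadget reproduces the neighbor constraint \emph{exactly}, and that the vertex split is simultaneously planarity- and bipartiteness-preserving. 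Once these are checked, $\NP$-completeness of the degree-$1$ restricted problem follows. Equivalently, one could simply inspect Kratochvíl's original construction and observe that its precolored vertices are already attached as pendants, which is the sense in which the stronger statement is ``obtained from'' \cite{Kratochvil1993}.
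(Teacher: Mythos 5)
Your proposal is correct in substance, but it takes a different route from the paper: the paper offers no proof at all for this statement --- it is cited from Kratochv{\'\i}l, with the remark that the stronger degree-1 form, ``although not explicitly stated,'' is obtained from his construction. That white-box route is exactly your closing sentence (inspect the original reduction and observe the precolored vertices are already pendants). Your main argument, by contrast, is a self-contained black-box reduction from the standard bipartite-planar version: delete each precolored vertex $w$ of color $\alpha$ and hang a fresh pendant precolored $\alpha$ on each former neighbor of $w$. This is a sound and arguably more robust derivation --- it needs no access to Kratochv{\'\i}l's construction, and it works generically for any $r$ and any graph class closed under this vertex-split operation --- whereas the paper's route yields the statement with zero additional machinery but only for readers willing to re-examine the original construction. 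Your planarity argument (pendants placed along the former edges $wv_i$) and the bipartiteness argument are both fine.

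One small gap to patch: you discard edges between two precolored vertices on the grounds that they ``merely witness that the given precoloring is itself proper,'' but the input precoloring need not be proper. If $ww' \in E$ with $c(w) = c(w')$, the original instance is a NO-instance, while your transformed instance (with that edge silently dropped) could be a YES-instance, so the reduction as literally stated is not answer-preserving on such inputs. The fix is trivial: first check in polynomial time whether any edge joins two equally-colored precolored vertices, and if so output a fixed NO-instance; only then apply your transformation. With that check added, the reduction is correct.
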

The reader should be aware that in the following, we use the color set $\{0,1,2\}$ instead of $[3]$. This will make it more convenient to describe the coloring through modular arithmetic. We are then ready to proceed with the main result of the section.
\begin{theorem}
\label{thm:3chrom_vil_planar}
\probThreeCV is $\NP$-hard when restricted to the class of planar graphs. Moreover, the same is true even when every swap must be between adjacent vertices.
\end{theorem}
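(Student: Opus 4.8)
The plan is to reduce from \probkPrExtShort with $r = 3$ on bipartite planar graphs in which every precolored vertex has degree $1$, which is $\NP$-complete by Theorem~\ref{thm:prext_hard}. Given such an instance $(G,W,c,3)$, I will build a planar graph $G'$ together with an initial coloring $c' : V(G') \to \{0,1,2\}$ and a budget $k'$, so that $(G,W,c,3)$ is a YES-instance iff $c'$ can be turned into a proper coloring of $G'$ using at most $k'$ swaps. Working over $\mathbb{Z}_3$ lets me phrase the local gadget invariants as modular conditions. The \emph{core} of $G'$ is $G - W$ (still bipartite and planar); since each $w \in W$ is a pendant with a unique neighbor $v_w$, the precoloring $c(w) = a$ is exactly a constraint ``$v_w$ must avoid color $a$''. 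I set $k' = |V(G-W)| =: N$, and use $k'+1$ pendants inside the locking gadgets below.

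The construction uses four gadgets. First, \emph{selector triangles}: to each core vertex $v$ I attach a triangle $\{v,v_1,v_2\}$ colored rainbow, so that in any proper coloring $\{c'(v),c'(v_1),c'(v_2)\} = \{0,1,2\}$, equivalently $c'(v)+c'(v_1)+c'(v_2) \equiv 0 \pmod 3$. A single \emph{adjacent} swap inside the triangle ($v \leftrightarrow v_1$ or $v \leftrightarrow v_2$) resets $c'(v)$ to any desired value at cost $\le 1$, while the triangle always contributes $(1,1,1)$ to the color counts. Second, \emph{locking gadgets}: to forbid color $a$ at $v_w$, I attach a vertex $p_w$ colored $a$ and adjacent to $v_w$, and give $p_w$ exactly $k'+1$ pendants of each color in $\{0,1,2\}\setminus\{a\}$; as in Lemma~\ref{lem:indset_to_rswap}, moving $p_w$ off color $a$ forces $\ge k'+1$ further swaps, so within budget $p_w$ stays at $a$ and $v_w \ne a$. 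Third, a disjoint rainbow \emph{forcing triangle} guarantees $\chi(G')=3$. Fourth, a separate \emph{count-adjusting gadget} (a collection of short precolored paths) whose proper colorings realize a sufficiently rich range of count vectors. The initial coloring $c'$ sets every core vertex to color $0$ (so every core edge is a conflict), colors the selectors rainbow, pins every lock at its color $a$, and colors the forcing triangle and the count-adjusting gadget properly.

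For correctness, recall $k' = N$. If the precoloring extends to a proper $3$-coloring $\gamma$ of $G$, then for each core vertex $v$ I use at most one adjacent swap inside its selector to recolor $v$ to $\gamma(v)$; this touches no lock and keeps every selector rainbow, so after at most $N = k'$ swaps the whole graph is proper (here $\gamma(v_w) \ne a$ matches the pinned lock). Conversely, within $k'=N$ swaps no lock can be violated, since that alone would cost $\ge N+1 = k'+1$ swaps --- each of the $N+1$ conflicting pendants is degree-$1$ and must participate in its own swap, a bound that holds verbatim in the general and in the adjacent-swap model. Hence every lock stays at $a$, so the final coloring restricts to a proper $3$-coloring of $G-W$ avoiding each forbidden color at each $v_w$, i.e.\ an extension of the precoloring. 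Because the upper bound is realized entirely by adjacent swaps and the lower bound holds for adjacent swaps, the \emph{same} polynomial-size instance $(G',c',k')$ witnesses hardness in both models, giving the ``Moreover'' strengthening.

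It remains to certify the promise, which is the delicate point. We have $\chi(G') = 3$ from the forcing triangle together with $3$-colorability of $G'$. For the second promise condition I must exhibit, \emph{unconditionally}, a proper $3$-coloring $c^\dagger$ of $G'$ with exactly the color counts of $c'$. The selectors and forcing triangle are balanced in every proper coloring, so the only source of count variation is the locks, whose pinned colors in $c'$ cannot all be retained by a proper coloring when no extension exists. I build $c^\dagger$ by taking \emph{any} proper $3$-coloring of $G-W$ together with the locks (this ignores the pinning and always exists, since that subgraph is $3$-colorable), and then recoloring the count-adjusting gadget to absorb the bounded discrepancy $L-L'$ between the pinned lock counts and the chosen ones; its range is made large enough (polynomial in $N$ and the number of locks) to hit the target. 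Planarity is maintained throughout: selectors and locks are attached inside a face incident to their core vertex (each $w \in W$ having degree $1$), and the forcing triangle and count-adjusting gadget are separate components. The main obstacle is exactly this decoupling: the promise witness $c^\dagger$ is a proper coloring that is \emph{far} in swap distance from $c'$ (it moves locks, costing $>N$ swaps), so it can never help meet the budget, while the locks guarantee that the only proper colorings reachable \emph{within} budget are the extension-respecting ones --- and all of this must coexist with planarity and with the adjacent-swap restriction, which forbids the ``free recoloring'' that a non-adjacent swap against a distant vertex would otherwise permit.
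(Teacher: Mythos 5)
Your core reduction is essentially the paper's: both reduce from \probkPrExtShort on bipartite planar graphs with degree-$1$ precolored vertices (Theorem~\ref{thm:prext_hard}), both encode the precoloring by ``lock'' vertices guarded by $k'+1$ pendants per forbidden color so that no lock can move within budget, both give each uncolored vertex a local gadget from which any color is one adjacent swap away, and both set the budget to $|V\setminus W|$. Your forward and backward directions are sound in both swap models (though your justification ``each of the $N+1$ conflicting pendants must participate in its own swap'' should be replaced by the sharper observation that a single swap removes the offending color from at most one pendant, since a swap between two pendants either changes nothing or just relocates the conflict).

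The genuine gap is in the promise --- exactly the point you call delicate --- and it hits the ``Moreover'' clause. Your promise witness $c^\dagger$ rebalances color counts through a \emph{disjoint} count-adjusting component. Two problems arise. First, that gadget is asserted, not constructed; this part is fixable (e.g., $M$ disjoint edges colored $\{0,1\}$, $M$ colored $\{1,2\}$, and $M$ colored $\{0,2\}$ realize every discrepancy $\Delta$ with $\sum_i \Delta_i = 0$ and $|\Delta_i|\le M$), so it is only a rigor gap. Second, and more seriously: when every swap must be between adjacent vertices, colors can never cross between connected components, so any coloring whose counts differ from those of $c'$ on some individual component --- which is precisely what $c^\dagger$ does on NO-instances --- is \emph{not} obtainable from $c'$ by any finite number of adjacent swaps. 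If the promise in the adjacent-swap variant is read as reachability (the reading the paper's own construction is careful to satisfy: all of its promise-restoring swaps are between adjacent vertices inside one component, using extra vertices $s'_{w,j}$ hung off the lock pendants so that each lock can be ``unlocked'' at large but finite cost), then your NO-instances violate the promise and the adjacent-swap strengthening is not established. The fix is to do the count repair inside the component, as the paper does: attach to each lock pendant a partner vertex colored with the lock's pinned color $a$, so that a violated lock ($p_w$ taking color $b\neq a$) can be repaired locally by swapping each conflicting $b$-pendant with its partner; with that change the disjoint count-adjusting component becomes unnecessary and your construction goes through in both models.
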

\begin{proof}
Let $(G=(V,E), W, c)$ be an instance of \probkPrExtShort, where $G$ is an $n$-vertex bipartite planar graph, $W \subseteq V$ a set of precolored vertices, and $c : W \to \{0,1,2\}$ a precoloring of the vertices in $W$. 
By Theorem~\ref{thm:prext_hard}, we may assume without loss that each (precolored) vertex in $W$ has degree~1. 
Our construction crucially depends on this fact. 
Let $r=3$ be a fixed color bound, and let $h = |V \setminus W|$. 
We will construct a graph $H$ along with its vertex-coloring $c_H$ such that the precoloring $c$ can be extended to a valid $r$-coloring of $G$ iff at most $h$ swaps are needed to transform $c_H$ to an optimal proper vertex-coloring of $H$, i.e., $B(c_H) \leq h$. 
To enforce the promise of the problem, it shall hold for $c_H$ that (\emph{i}) it uses precisely $\chi(H)$ colors, and that (\emph{ii}) by using a finite number of swaps $c_H$ can be transformed into a proper coloring of $H$.

\heading{Construction} 
Let $X = V \setminus W$ be the set of uncolored vertices, let $A$ and $B$ be the bipartition of $G$, that is, $V = A \cup B$, and let us name the set of $r$ colors $C = \{ 0,1,2 \}$. 
The graph $H$ and its vertex-coloring $c_H$ are constructed from $G$ and its precoloring $c$ as follows. 

\begin{itemize}
\item We retain the coloring on the vertices of $W$, that is, $c_H(w) = c(w)$, for every $w \in W$. 

\item If $x \in X$ has one or more neighbors colored with color $i$ (observe it cannot have distinctly colored neighbors), we set $c_H(x) = i$. 
If all neighbors of $x$ are uncolored, we set $c_H(x)= 0$ if $x \in A$. 
Otherwise, $x \in B$, so we set $c_H(x)= 1$. 
For each $x$, we also add two vertices $x_1$ and $x_2$ along with the edges $xx_1$ and $xx_2$. 
We color $c_H(x_1) = (i+1) \bmod 3$ and $c_H(x_2) = (i+2) \bmod 3$.

\item For each precolored vertex $w \in W$ we add $2(h+1)$ new vertices $s_{w,1},\ldots,s_{w,h+1}$ and $t_{w,1},\ldots,t_{w,h+1}$. 
These will be made pendant vertices of $w$ by adding the altogether $2(h+1)$ edges $ws_{w,\ell}$ and $wt_{w,\ell}$ where $\ell \in [h+1]$. 
They receive a color as follows, where $c(w)$ denotes the color of $w$:

\begin{itemize}
\item if $w \in A \wedge c(w) \neq 0$, then $c_H(s_{w,\ell}) = 0$  and $c_H(t_{w,\ell}) = f$, where $f \in C \setminus \{ 0,c(w)\}$;

\item if $w \in B \wedge c(w) \neq 1$, then $c_H(s_{w,\ell}) = 1$  and $c_H(t_{w,\ell}) = g$, where $g \in C \setminus \{ 1,c(w)\}$; and

\item in all other cases $c_H(s_{w,\ell}) = (i+1) \bmod 3$, and $c_H(t_{w,\ell}) = (i+2) \bmod 3$.
\end{itemize}

\item For every precolored vertex $w \in A \wedge c(w) \neq 0$, we add $h$ new vertices $s'_{w,1},\ldots,s'_{w,h}$ with the edges $s_{w,j}s'_{w,j}$, and set $c_H(s'_{w,j}) = c(w)$, where $j \in [h]$.

\item For every precolored vertex $w \in B \wedge c(w) \neq 1$, we add $h$ new vertices $s'_{w,1},\ldots,s'_{w,h}$ with the edges $s_{w,j}s'_{w,j}$, and set $c_H(s'_{w,j}) = c(w)$, where $j \in [h]$.

\item Finally, consider an arbitrary precolored vertex $w \in W$, and its set of $h+1$ pendant vertices $t_{w,j}$. We choose an arbitrary vertex among the $t_{w,j}$ vertices, and call it $v$. Then, we add two vertices $r$ and $r'$ along with the edges $vr$, $vr'$, and $rr'$. These vertices are colored such that $c_H(r) = (c_H(v)+1) \bmod 3$ and $c_H(r') = (c_H(v)+2) \bmod 3$. 
\end{itemize}

\begin{figure}
\begin{minipage}[b]{.5\linewidth}
\centering
\includegraphics[width=0.9\textwidth,keepaspectratio]{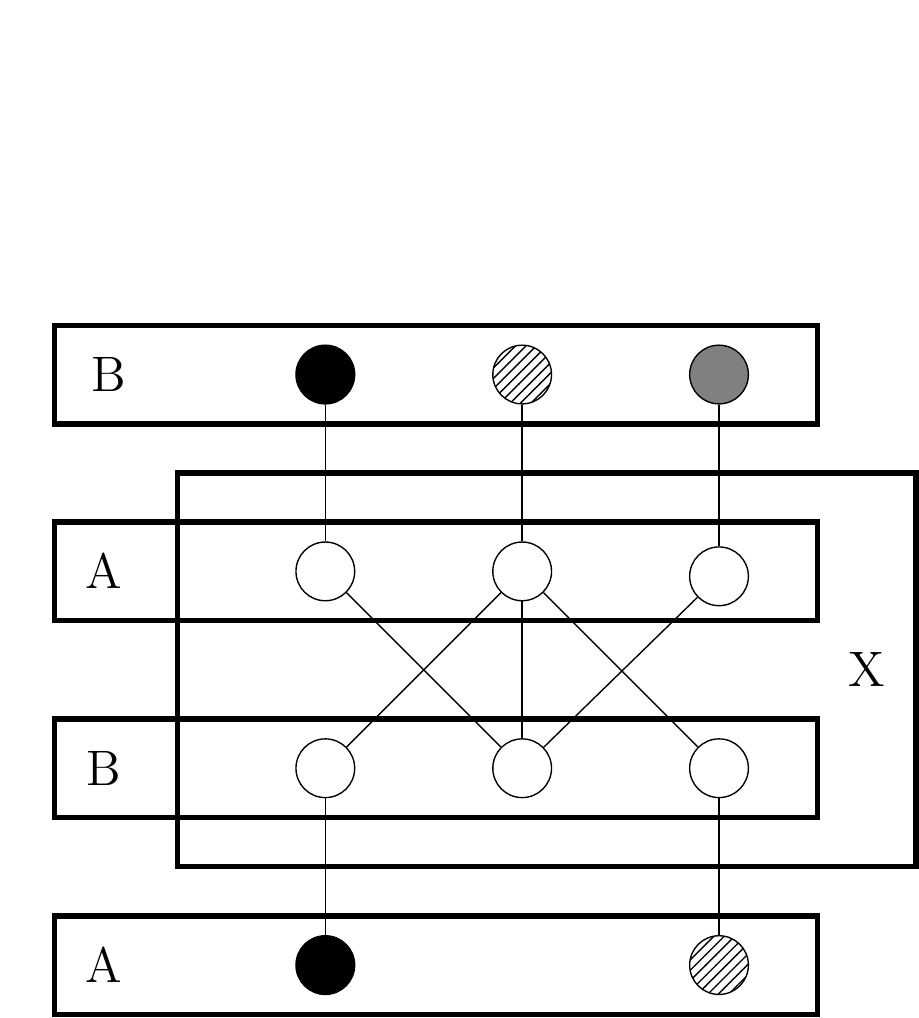}
\end{minipage}%
\begin{minipage}[b]{.5\linewidth}
\centering
\includegraphics[width=0.9\textwidth,keepaspectratio]{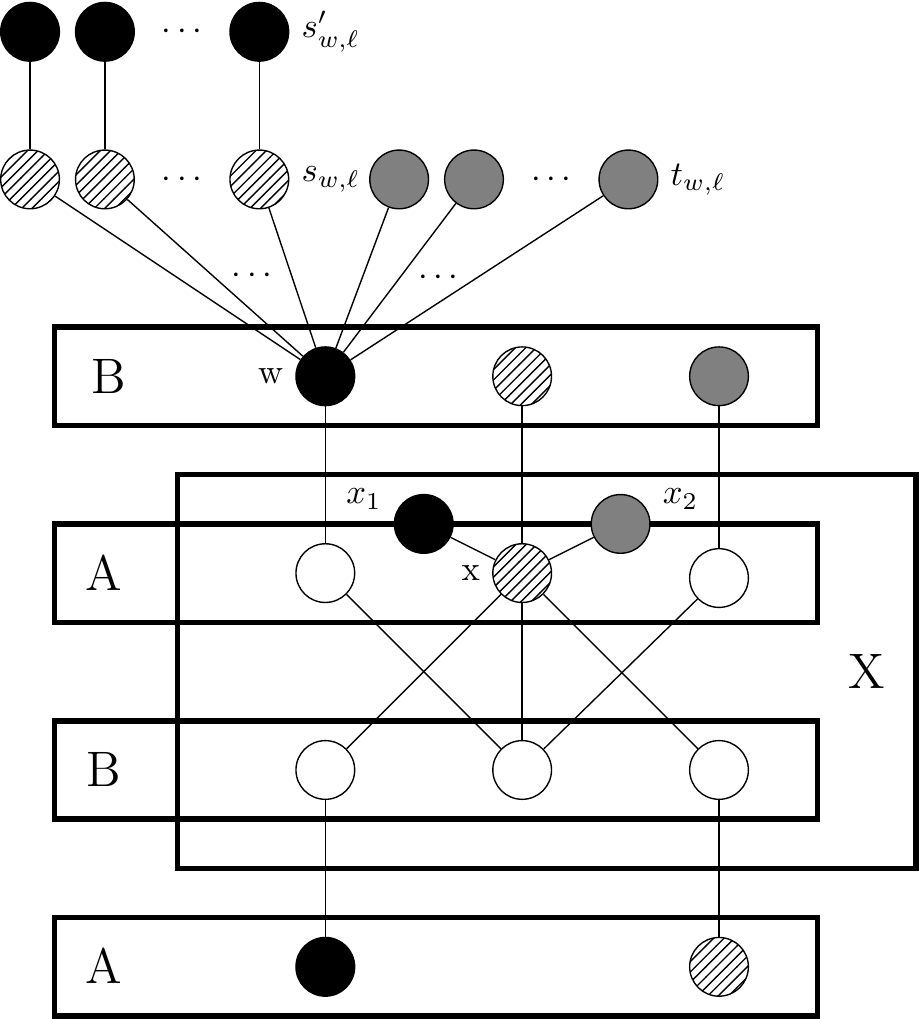}
\end{minipage}
\caption{\textbf{(a)} A partially precolored input graph $G$ of \probkPrExtShort. \textbf{(b)} To reduce clutter, the reduction of Theorem~\ref{thm:3chrom_vil_planar} expanded for only two vertices $x \in X$ and $w \in W$. The colors are ``\protect\tikz[baseline=0.25ex] \protect\fill[draw,fill=black] (1ex,1ex) circle (1ex); $= 0$'', ``\protect\tikz[baseline=0.25ex] \protect\fill[draw,pattern=north east lines] (1ex,1ex) circle (1ex); $= 1$'', and ``\protect\tikz[baseline=0.25ex] \protect\fill[draw,fill=gray] (1ex,1ex) circle (1ex); $= 2$''.}
\label{fig:prext_input}
\vspace{-0.25cm}
\end{figure}

\noindent This finishes the construction of the graph $H$ along with its vertex-coloring $c_H$. 
It is straightforward to verify $G$ is planar, but not bipartite because of the triangle on the vertices $v$, $r$, and $r'$. 
An example is shown in Figure~\ref{fig:prext_input}.
We will then prove $( G=(V,E), W, c )$ is YES-instance of \probPrExtShort iff $B(c_H) \leq h$, that is, if $h$ swaps suffice to transform $c_H$ into a proper coloring of $H$. 

\heading{Correctness} 
Suppose $c$ can be extended to a proper vertex-coloring $c'$ of $G$. 
We will show $h$ swaps suffice to transform $c_H$ into a proper coloring of $H$. 
For each $x \in X$, we perform a swap between $x$ and either $x_1$ or $x_2$. 
Then for every $x$ it holds that either $c_H(x) = c'(x)$ (and there is no need to swap it), or one of the two described swaps can change the color of $x$ to $c'(x)$. 
Now, let $c_H(x)$ be the color of $x$ before the swap, and $c'_H(x) = c'(x)$ its color after the swap. 
Let $u_1,\ldots,u_m$ be the neighbors of $x$. 
If $u_i \in W$ was a precolored vertex, then $c_H(u_i) = c_H(x) \neq c'_H(x)$ by construction; if $u_i \in X$ was an uncolored vertex then the valid coloring $c'$ guarantees that $c'_H(x) \neq c'_H(u_i)$. 
Thus, the claim follows.

For the other direction, suppose $B(c_H) \leq h$. 
Consider a precolored vertex $w \in W$ and let $c_H(w) = i$. 
We claim that for any valid extension $c'$ of $c$, it holds that $c'(w) = c_H(w) = c(w)$.
More precisely, we will show that if the color of $w$ was changed, then it is impossible for $c'$ to be an extension of $c$. 
By construction, the vertex $w$ has $h+1$ neighbors $s_{w,\ell}$ each colored $p$, and $h+1$ neighbors $t_{w,\ell}$ each colored $q$ with $i \neq p \neq q$. 
Thus, if one swap was used to change the color on $w$, then after $h-1$ swaps there would be at least one edge incident to $w$ with its endpoints having the same color. 
So we have that $c'(w) = c_H(w) = c(w)$. 
Moreover, $c'$ is completed to an extension of $c$ by picking the colors $c'_H(x)$ assigned to the uncolored vertices $x \in X$ after the $h$ swaps. 
This completes the proof of correctness for our reduction. 

\heading{Promise}
Let us then show that the promise holds as well. 
That is, we show that $c_H$ can be transformed into a proper vertex-coloring $c''_H$ of $H$ with a finite number of swaps even if the original precoloring of $G$ cannot be extended to a proper coloring (but in this case, more than $h$ swaps are needed). 

First, we show that a finite number of swaps gives us a 2-coloring for $A \cup B$ such that every vertex in $A$ receives color $0$, and every vertex in $B$ color $1$.
Afterwards, we will adjust the remaining pendant vertices $s_{w,\ell}$, $t_{w,\ell}$, and $s'_{w,j}$ so that no color conflict remains. 

If $x \in X \cap A \wedge c_H(x) \neq 1$, then we swap it with one of its neighbors $x_1$ or $x_2$ and get $c''_H(x)=0$. 
Similarly, if $x \in X \cap B \wedge c_H(x)\neq 2$, a swap with either $x_1$ or $x_2$ gives us $c''_H(x) = 1$. 

Let us then consider the precolored vertices. 
If $w \in W \cap A \wedge c_H(w) \neq 0$, we swap $w$ with $s_{w,h+1}$ which is colored $0$. 
This causes a conflict $c''_H(w) = c_H(s_{w,j}) = 0$, which is fixed by swapping $s_{w,j}$ with $s'_{w,j}$ that are colored $c_H(s'_{w,j}) = c_H(w) \neq 0$. 
Similarly, if $w \in W \cap B \wedge c_H(w) \neq 1$, we swap $w$ with $s_{w,h+1}$ which is colored $1$. 
This causes conflicts $c''_H(w) = c_H(s_{w,j}) = 1$, where $j \in [h]$. 
To fix them, we swap $s_{w,j}$ with $s'_{w,j}$ that are colored $c_H(s'_{w,j}) = c_H(w) \neq 1$. 
Thus, we have that all vertices in $A$ are colored $1$, and all vertices in $B$ are colored $1$. 
Moreover, for all $w \in W$ we have $c''_H(w) \neq c''_H(s_{w,\ell}) \wedge c''_H(w) \neq c''_H(t_{w,\ell})$, where $\ell \in [h+1]$. 
Also, $c''_H(s_{w,j}) \neq c''_H(s_{w,j})$, where $j \in [h]$. 
Thus, $c''_H$ is a valid coloring of $H$ and the triangle $v,r,r'$ guarantees that $\chi(H) = 3$. Thus, the claim follows.
\end{proof}

\noindent By removing the promise condition, we can modify our reduction to obtain the following. 
\begin{corollary}
For every $r \geq 3$, the problem \probrFixSwap is $\NP$-complete for bipartite planar graphs.
\end{corollary}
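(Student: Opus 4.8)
The plan is to reuse the reduction from \probkPrExtShort constructed in the proof of Theorem~\ref{thm:3chrom_vil_planar}, observing that the single ingredient of that construction tailored to the promise setting is the triangle on the vertices $v$, $r$, and $r'$, whose only role was to force $\chi(H) = 3$. Since \probrFixSwap carries no promise, I would simply delete this triangle. Writing $H'$ for the resulting graph, note that $H'$ inherits planarity from $G$, and — crucially — that $H'$ is again bipartite: every gadget vertex added to $G$ (the pendants $x_1, x_2$, the pendants $s_{w,\ell}$ and $t_{w,\ell}$, and the vertices $s'_{w,j}$) is attached via a degree-$1$ or degree-$2$ attachment forming a tree, and adjoining forests to a bipartite graph preserves bipartiteness. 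The triangle was the unique odd cycle, so removing it leaves a bipartite planar graph.

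For $r = 3$, I would then argue that the equivalence ``$(G, W, c)$ is a YES-instance of \probkPrExtShort iff $B(c_H) \le h$'' established in Theorem~\ref{thm:3chrom_vil_planar} survives verbatim for $(H', c_H, h)$. Indeed, both directions of the Correctness paragraph manipulate only colors in $\{0,1,2\}$, and neither invokes the triangle $v, r, r'$ nor the guarantee that $c_H$ is a permutation of an optimal coloring; the deleted triangle appears solely in the Promise paragraph. Hence $(H', c_H, h)$ is a YES-instance of \probrFixSwap iff the precoloring extends, which yields \NP-hardness on bipartite planar graphs. Membership in \NP\ is routine: since at most $h \le n$ swaps are permitted, a certificate is the list of these swaps, which one applies to $c_H$ and then checks in polynomial time that the resulting coloring is proper.

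To lift the result to every $r \ge 4$ while remaining within bipartite planar graphs — which forbids the usual device of adjoining a properly colored $r$-clique — I would instead attach to each vertex of $H'$ a bundle of $h+1$ pendant vertices of color $j$, for each color $j \in \{3, 4, \ldots, r-1\}$ (indexing colors from $0$, as in the theorem). These degree-$1$ additions preserve both planarity and bipartiteness. The forbidding argument adapts to the swap model: if the final color of an original vertex $u$ were some $j \ge 3$, then all $h+1$ pendants of color $j$ at $u$ would be in conflict, and since these pendants are mutually non-adjacent and each shares only its edge to $u$, no single swap can discharge more than one such conflict, so repairing them costs more than the budget $h$. Thus every original vertex is confined to $\{0,1,2\}$ in any $\le h$-swap solution, and the $r = 3$ analysis applies unchanged; note also that $c_H$ itself remains conflict-free at the new pendants, since each original vertex starts, and must remain, in $\{0,1,2\}$.

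The step I expect to require the most care is verifying that the two directions of the correctness proof genuinely remain intact once both the triangle is deleted and the extra-color pendants are introduced. For the forward direction one checks that the same $h$ swaps that properly color $H'$ leave the new pendants untouched and unconflicted; for the backward direction one checks that the $h+1$ pendants $s_{w,\ell}, t_{w,\ell}$ of each precolored $w$ still pin $c'(w) = c(w)$ exactly as before (the new pendants only add constraints, never relax them), so that the final coloring restricted to $G$ is a proper extension of $c$. Once these two routine confirmations are in place, \NP-completeness for bipartite planar graphs follows for every $r \ge 3$.
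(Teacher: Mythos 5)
Your proposal is correct and matches the paper's (implicit) proof: the corollary is obtained precisely by deleting the triangle $v,r,r'$ from the construction of Theorem~\ref{thm:3chrom_vil_planar} (whose correctness argument never uses it), and your pendant-bundle device for lifting to $r \geq 4$ mirrors the pendant trick the paper itself uses in Section~\ref{sec:kernel}, being the right substitute for a clique on bipartite planar graphs. One small repair to your justification: the reason a single swap cannot fix two of the $h+1$ conflicting color-$j$ pendants is not their non-adjacency but that they all carry the \emph{same} color, so a swap between two of them changes nothing and each swap can remove at most one color-$j$ token from the bundle, forcing more than $h$ swaps.
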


Another corollary follows by a chain of reductions.
First, Lichtenstein~\cite{Lichtenstein1982} gives a reduction from \probThreeSAT to \probPlanarThreeSAT showing \probPlanarThreeSAT cannot be solved in time $2^{o(\sqrt{n+m})}$, unless ETH fails. 
Continuing to compose reductions, Mansfield~\cite{Mansfield1983} gives a linear reduction from \probPlanarThreeSAT to \probPlanarOneInThreeSAT, which is then similarly reduced by Kratochv{\'\i}l~\cite{Kratochvil1993} to \probkPrExtShort (the result of Theorem~\ref{thm:prext_hard}). 
Finally, it can be verified the construction of Theorem~\ref{thm:3chrom_vil_planar} has linear size, giving us the following.
\begin{corollary}
\label{cor:planarcv_subexp}
There is no algorithm which solves \probPlanarThreeCV in $2^{o(\sqrt{n})}$ time unless ETH fails.
\end{corollary}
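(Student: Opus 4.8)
The plan is to transport an ETH-based lower bound along the chain of reductions indicated above, while keeping careful track of instance sizes so that the quadratic blowup is incurred \emph{exactly once}, at the planarization step. First I would recall that, after the sparsification lemma, ETH~\cite{eth} implies that \probThreeSAT on instances of size $\sigma = n+m$ (with $m = O(n)$) admits no $2^{o(\sigma)}$ algorithm. Lichtenstein's reduction~\cite{Lichtenstein1982} to \probPlanarThreeSAT replaces each crossing of a fixed planar drawing of the incidence structure by a crossover gadget; since such a drawing may have $\Theta(\sigma^2)$ crossings, the resulting planar instance has size $\Theta(\sigma^2)$. Hence a hypothetical $2^{o(\sqrt{N})}$ algorithm for \probPlanarThreeSAT, with $N$ its instance size, would give a $2^{o(\sqrt{\sigma^2})} = 2^{o(\sigma)}$ algorithm for \probThreeSAT, contradicting ETH. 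This is the sole source of the square root in the exponent.

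Next I would compose the remaining reductions, each of which is linear, so that the $2^{o(\sqrt{N})}$ bound is preserved verbatim with $N$ the number of vertices/clauses of the current instance. Mansfield's reduction~\cite{Mansfield1983} takes \probPlanarThreeSAT to \probPlanarOneInThreeSAT with a constant-factor size increase, and Kratochv{\'\i}l's reduction~\cite{Kratochvil1993} (yielding Theorem~\ref{thm:prext_hard}) takes \probPlanarOneInThreeSAT to \probkPrExtShort on bipartite planar graphs whose precolored vertices all have degree~$1$, again linearly. Because a composition of linear reductions is linear, the \probkPrExtShort instance produced from a \probThreeSAT instance of size $\sigma$ has $\Theta(\sigma^2)$ vertices and therefore admits no $2^{o(\sqrt{N'})}$ algorithm, $N'$ being its number of vertices, unless ETH fails.

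Finally I would invoke Theorem~\ref{thm:3chrom_vil_planar}: its construction turns an instance $(G,W,c)$ of \probkPrExtShort with degree-$1$ precolored vertices into an instance $(H,c_H)$ of \probPlanarThreeCV whose answer agrees with the \probkPrExtShort answer and for which the promise is guaranteed. Correctness and the promise are precisely what that theorem establishes, so the only remaining ingredient is the quantitative claim $|V(H)| = O(|V(G)|)$: a linear-size reduction here preserves the $2^{o(\sqrt{n})}$ bound, whereas any super-linear blowup would weaken the exponent and break the argument.

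I expect this last size bookkeeping to be the main obstacle rather than a routine check. The delicate point is that each precolored vertex $w$ receives $2(h+1)$ pendants (together with up to $h$ further $s'_{w,j}$ vertices), where $h = |V \setminus W|$, so a naive count yields $\Theta(h\cdot|W|)$ added vertices, which is quadratic in $|V(G)|$ in the worst case. To make the reduction genuinely linear I would therefore have to argue that, in the instances handed down the chain, the relevant quantities are balanced---for instance by bounding the number of precolored vertices, or the number of forbidden-color pendants needed to lock each precolored vertex, by $O(1)$ per uncolored vertex---or else replace the pendant ``lock'' with a smaller gadget that still forces $c'(w) = c(w)$ within the swap budget. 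Once $|V(H)| = O(|V(G)|)$ is secured, the four reductions compose into a single reduction from \probThreeSAT of total quadratic blowup, so that a $2^{o(\sqrt{n})}$ algorithm for \probPlanarThreeCV would contradict ETH, which is exactly the claimed statement.
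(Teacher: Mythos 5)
Your plan follows the paper's proof line for line: Lichtenstein, then Mansfield, then Kratochv{\'\i}l (Theorem~\ref{thm:prext_hard}), then the construction of Theorem~\ref{thm:3chrom_vil_planar}, with the square root entering exactly once at the planarization step. The place where you stop short --- the size of the final reduction --- is, however, a genuine gap, and it is a gap in the paper's own argument as well: the paper dismisses it with ``it can be verified the construction of Theorem~\ref{thm:3chrom_vil_planar} has linear size,'' but that claim fails for the construction as written. Each precolored vertex $w \in W$ receives $2(h+1)$ pendants $s_{w,\ell}, t_{w,\ell}$ and up to $h$ further vertices $s'_{w,j}$, where $h = |V \setminus W|$; in the \probkPrExtShort instances produced by Kratochv{\'\i}l's reduction both $|W|$ and $h$ are $\Theta(N')$ in the instance size $N'$, since the degree-one precolored vertices encode the clause structure. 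Hence $|V(H)| = \Theta(|W| \cdot h) = \Theta((N')^2) = \Theta(\sigma^4)$ for a \probThreeSAT instance of size $\sigma$, and a $2^{o(\sqrt{n})}$ algorithm for \probPlanarThreeCV would translate only into a $2^{o(\sigma^2)}$ algorithm for \probThreeSAT, which does not contradict ETH. What the chain actually establishes, for you and for the paper alike, is the weaker statement that there is no $2^{o(n^{1/4})}$-time algorithm unless ETH fails.

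Moreover, repairing this is not the routine bookkeeping your fallback suggestions hope for. The lock on a precolored vertex $w$ must survive the entire swap budget $h$, and any lock gadget of size $s$ attached to the rest of the graph only through $w$ can be defeated by transposing two of its internal color classes, at a cost of $O(s)$ recolorings (and essentially $O(s)$ swaps in this construction, since leaves of every color with suitably colored parents are available as swap partners). So per-vertex locks of size $\Omega(h)$ are forced, and with $|W| = \Theta(n)$ precolored vertices a quadratic total is unavoidable within this design; your first fallback (bounding $|W|$) is not available either, since Kratochv{\'\i}l's hardness genuinely needs $\Theta(N')$ precolored vertices. A true fix requires a structural change --- lowering the budget, sharing one lock among many precolored vertices without destroying planarity, or exploiting the fact that swaps preserve color-class cardinalities so that colors can be locked globally by counting rather than locally by pendants --- and neither your proposal nor the paper supplies one. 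In short: same approach as the paper, correct diagnosis of the soft spot, but the corollary remains unproven as argued.
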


\noindent However, on a positive side, we claim that for any fixed $r$, the problem \probPlanarrFix (and its promise variant) can be solved in $2^{O(\sqrt{n})}$ time. To see this, we recall that Junosza-Szaniawski~\emph{et al.}~\cite{Junosza2015} showed that for any fixed~$r$, the optimization variant of \probrFix is solvable in $O(nr^{t+2})$ time on graphs of treewidth~$t$. To leverage this result, it is enough to recall the treewidth of a planar graph is $O(\sqrt{n})$. This implies a $2^{O(\sqrt{n})}$-time algorithm for \probPlanarrFix.

Finally, let us mention that by modifying the construction of Theorem~\ref{thm:3chrom_vil_planar} slightly, one can show a similar result for \probrFixPromise, and its non-promise variant as well.
\begin{theorem}
\label{thm:3swappromise_nphard}
\probThreeSwapPromise is $\NP$-hard when restricted to the class of planar graphs.
\end{theorem}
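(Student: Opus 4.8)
\heading{Plan}
The plan is to reduce from \probkPrExtShort on bipartite planar graphs whose precolored vertices all have degree~$1$ (Theorem~\ref{thm:prext_hard}), reusing almost verbatim the graph $H$ and the coloring $c_H$ built in the proof of Theorem~\ref{thm:3chrom_vil_planar}; since the stated problem \probThreeSwapPromise is the \emph{recoloring} (Fix) promise variant with $r=3$ (cf.\ the remark on \probrFixPromise preceding the statement), only the basic operation changes, not the graph. The key observation is that the promise is a purely \emph{structural} condition: a coloring is a permutation of an optimal proper coloring exactly when it has the same color-class sizes as some proper $\chi(H)$-coloring. This condition does not refer to the basic operation, so it is identical in the swap and the recoloring settings. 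Consequently the facts that $\chi(H)=3$ (forced by the triangle $v,r,r'$) and that $c_H$ is a permutation of an optimal proper $3$-coloring---witnessed in the proof of Theorem~\ref{thm:3chrom_vil_planar} by a finite sequence of swaps transforming $c_H$ into a proper $c''_H$---are inherited unchanged. Only the budget needs to be re-examined.

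\heading{Budget}
First I would re-count the cost of realizing a target color on an uncolored vertex when the operation is a recoloring rather than a swap. Each $x\in X$ carries two pendant partners $x_1,x_2$ colored $(i{+}1)\bmod 3$ and $(i{+}2)\bmod 3$, where $i=c_H(x)$. In the swap setting a single swap $x\leftrightarrow x_1$ (or $x\leftrightarrow x_2$) recolors $x$; in the recoloring setting we must first recolor $x$ and then repair the unique partner that now clashes with it, so changing the color of one $x$ costs exactly two recolorings (and zero if the target equals $c_H(x)$). I would therefore set the recoloring budget to $k'=2h$ and, so that the backward direction still pins down the precolored vertices, increase the locking pendants on each $w\in W$ from $h{+}1$ to $2h{+}1$ per color distinct from $c(w)$ (and correspondingly the balancing vertices $s'_{w,j}$ from $h$ to $2h$). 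With $2h{+}1$ pendants of each such color, recoloring any $w$ creates at least $2h{+}1$ fresh conflicts, which cannot be repaired within $k'=2h$ recolorings.

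\heading{Correctness}
For the forward direction, given a proper extension $c'$ of $c$, I would recolor every $x$ with $c_H(x)\neq c'(x)$ to $c'(x)$ and repair its unique clashing partner, spending at most $2h$ recolorings; since $c'$ is proper all edges of $G$ become properly colored, while the locking pendants, the vertices $s_{w,\ell},t_{w,\ell},s'_{w,j}$, and the triangle $v,r,r'$ are left untouched and remain proper, so $H$ is properly $3$-colored. For the backward direction, suppose at most $2h$ recolorings turn $c_H$ into a proper coloring $c^{*}$. By the locking argument every $w\in W$ satisfies $c^{*}(w)=c(w)$, hence the restriction of $c^{*}$ to $V(G)=W\cup X$ is a proper $3$-coloring of $G$ extending $c$, so $(G,W,c)$ is a YES-instance of \probkPrExtShort. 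Planarity of $H$ is immediate, as every added vertex (partners, locking pendants, the $s'$-vertices, and $r,r'$) has degree at most~$2$ and is attached locally, preserving the embedding of $G$. Together with the inherited promise this shows $(H,c_H,k')$ is a valid instance of \probThreeSwapPromise that is a YES-instance iff $(G,W,c)$ is; since \probkPrExtShort is $\NP$-complete for $r=3$ on bipartite planar graphs, \probThreeSwapPromise is $\NP$-hard on planar graphs. The non-promise variant \probrFix for $r=3$ follows by dropping the triangle and the balancing $s'$-vertices.

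\heading{Main obstacle}
The delicate point is the budget calibration: I must ensure that the factor~$2$ from repairing partner vertices, the enlarged pendant count $2h{+}1$ that keeps each precolored vertex rigid under $2h$ recolorings, and the scaled balancing that preserves the permutation property are mutually consistent, so that $2h$ recolorings suffice \emph{exactly} when an extension exists and no cheaper ``shortcut'' (e.g.\ recoloring a partner or a pendant instead of the intended vertex) can repair a NO-instance within budget. Verifying that every auxiliary vertex is either rigid or contributes only forced cost---so that the sole remaining freedom lies in the colors of the $X$-vertices---is where the care is required; obtaining the sharper ETH consequence analogous to Corollary~\ref{cor:planarcv_subexp} would additionally demand a linear-size locking gadget, which is not needed for $\NP$-hardness.
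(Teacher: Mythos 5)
Your proof is correct, and it targets the right problem (you correctly read \probThreeSwapPromise as the $r=3$ recoloring promise variant), but it adapts the construction of Theorem~\ref{thm:3chrom_vil_planar} by a different mechanism than the paper. The paper changes the \emph{graph}: for each $x \in X$ it deletes the edges $xx_1$ and $xx_2$ and adds the edge $x_1x_2$, so that each $x$ can be given its target color by a single recoloring with no repair needed; the budget then stays at $h$, the original locking pendants ($h+1$ per color) remain sufficient, and the detached $2$-clique still witnesses the promise, since a swap in the permutation sense need not be between adjacent vertices. You instead keep the graph of Theorem~\ref{thm:3chrom_vil_planar} unchanged and recalibrate the \emph{budget}: cost $2$ per uncolored vertex (recolor $x$, repair the one clashing pendant partner), budget $k'=2h$, locking pendants scaled to $2h+1$ per color, and balancers $s'_{w,j}$ scaled to $2h$. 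Your accounting is consistent: the forward direction spends at most $2h$ recolorings, the backward direction is robust because any proper coloring reached within $2h$ recolorings must fix every $w \in W$ at $c(w)$ (changing $w$ forces at least $2h+1$ pendant recolorings) and then restricts to a proper extension on $G$, and the promise transfers verbatim since the graph is unchanged and the swap sequence from the original proof still applies. Your worry in the ``main obstacle'' paragraph about shortcuts is in fact unnecessary for exactly this reason---the backward direction never needs to control what the recolorings do beyond the locking of $W$. What each approach buys: the paper's edge surgery is more economical (no rescaling of gadgets, budget unchanged, and it immediately yields the subsequent \probrFix corollary on bipartite planar graphs after dropping the triangle), but it must re-verify the promise for the modified graph; your version keeps the promise argument untouched at the price of rescaling every locking gadget, and---as you note---your $2h+1$ pendants inflate the construction only linearly, so the $2^{o(\sqrt{n})}$ ETH consequence is preserved by your variant as well.
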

\begin{proof}
Consider the construction in Theorem~\ref{thm:3chrom_vil_planar}.
For each $x \in X$, remove the edges $xx_1$ and $xx_2$, and add the edge $x_1x_2$.

Observe that it still holds that in $h$ recolorings, it is impossible to recolor a precolored vertex $w \in W$.
Thus, the correctness of the reduction holds.
To see that the promise is enforced, note that it is still true that $\chi'(G) = 3 = r$.
For each $x \in X$, there is a corresponding 2-clique, from which a color distinct from $c'(x)$ can be swapped for $x$.
Thus, the promise is also enforced.
We have a valid reduction, and thus conclude the proof.
\end{proof}
By relaxing the requirement on the promise condition, we establish again the following. We also note that using different ideas, the same conclusion was reached in~\cite{fix-journal}.
\begin{corollary}[\hspace{1sp}\cite{fix-journal}]
For every $r \geq 3$, the problem \probrFix is $\NP$-complete for bipartite planar graphs.
\end{corollary}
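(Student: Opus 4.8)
The plan is to reuse the gadgetry of Theorem~\ref{thm:3swappromise_nphard}, discarding exactly those parts that were present only to certify the promise, and then to argue directly that what remains is a correct reduction from \probkPrExtShort\ to \probrFix\ on bipartite planar graphs. Membership in $\NP$ is immediate: a target proper $r$-coloring $c'$ is a polynomial-size certificate, and one checks in polynomial time both that $c'$ is proper and that its Hamming distance from the input coloring is at most $k$. All the work is therefore in the hardness reduction.

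For hardness with $r=3$, I would start from an instance $(G,W,c)$ of \probkPrExtShort\ with $G$ bipartite planar and every $w \in W$ of degree~$1$ (Theorem~\ref{thm:prext_hard}), and build a \probrFix\ instance $(H,3,c_H,k)$ as follows. Set $H := G$ and $k := h$, where $h = |V \setminus W|$. Retain $c_H(w) = c(w)$ for each $w \in W$, and \emph{pin} these colors by attaching to each $w$ exactly $h+1$ pendants in each of the two colors distinct from $c(w)$, colored accordingly; every uncolored vertex $x \in X = V \setminus W$ is given an arbitrary starting color in $\{0,1,2\}$. The key point is that I would drop all the promise-enforcing gadgets of the earlier construction---most importantly the triangle on $v,r,r'$, which was the \emph{only} odd cycle and whose sole purpose was to force $\chi(H)=3$, together with the auxiliary vertices $x_1,x_2$ and the $s'_{w,j}$ that merely balanced color-class sizes. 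Since $G$ is bipartite and we have added only pendant vertices, $H$ stays bipartite; planarity is likewise preserved by pendants.

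The correctness argument is then short. The $h+1$ off-color pendants at each $w$ make any recoloring of $w$ cost more than $h$ to repair, so within a budget of $k=h$ recolorings the colors on $W$ are frozen at $c$. Consequently the only conflicts that can be removed are those on edges of $G$, and these are removed precisely by recoloring the vertices of $X$ into a proper extension of $c$; since $|X| = h$, at most $h$ recolorings are ever needed. Thus a proper $3$-coloring of $H$ reachable within $h$ recolorings is exactly a proper extension of $c$ to $X$, yielding the equivalence of the two instances. For $r \geq 4$ I would, exactly as in Section~\ref{sec:kernel}, attach to every original vertex $(r-3)(k+1)$ further pendants colored with $3,4,\ldots,r-1$, each color appearing $k+1$ times; within the budget this forces every original vertex to use a color in $\{0,1,2\}$, reducing the $r$-color case to the $3$-color one while still only adding pendants, hence preserving bipartiteness and planarity.

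The step I expect to require the most care is the budget bookkeeping: I must confirm that $k=h$ is neither too generous nor too tight---that no recolorings are profitably ``wasted'' on pendants (they never are, since each pendant is adjacent only to a frozen vertex of $W$ and already sits in a compatible color), and that $|X|$ recolorings genuinely suffice in the YES direction. The second delicate point is checking that deleting the $v,r,r'$ triangle does not silently break anything else; because that triangle interacted only with the color-balancing pendants demanded by the promise, removing it together with those pendants leaves a self-contained bipartite instance, which is exactly what the (non-promise) statement asks for.
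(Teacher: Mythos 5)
Your proposal is correct and follows essentially the same route as the paper: it reduces from Kratochv\'{\i}l's \probkPrExtShort on bipartite planar graphs with degree-$1$ precolored vertices, freezes each precolored vertex with $h+1$ pendants in each forbidden color, sets the budget to $h = |V \setminus W|$, and discards the promise-enforcing gadgets (the triangle $v,r,r'$, the $x_1,x_2$ pairs, and the $s'_{w,j}$ vertices), which is exactly what the paper means by ``relaxing the requirement on the promise condition'' in Theorem~\ref{thm:3swappromise_nphard}. Your extension to $r \geq 4$ via $(r-3)(k+1)$ color-blocking pendants is also the paper's standard trick from Section~\ref{sec:kernel}, and it correctly preserves bipartiteness and planarity.
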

As also remarked in~\cite{fix-journal}, it is interesting to contrast the above with the results of~\cite{Junosza2015} where it was shown that when $r=2$, the problem \probrFix is solvable in polynomial time. In other words, if we have a bipartite graph that is \emph{not} colored optimally (i.e. more than two colors are used), fixing the coloring is hard.

\section{Conclusions}
We further investigated the complexity of restoring corrupted colorings, especially from a parameterized perspective. Interestingly, we showed that \probrFixSwap is $\W[1]$-hard parameterized by the number of swaps, while \probrFix is known to be FPT parameterized by the number of recolorings. We believe the problems behave similarly for treewidth. Indeed, we conjecture that \probrFixSwap is $\W[1]$-hard parameterized by treewidth, for every $r \geq 3$. One could also consider other natural basic operations, such as swaps between adjacent vertices.

Finally, it might be interesting to perform a similar study for edge-colored graphs. In particular, how does the complexity of edge recoloring compare to vertex recoloring?

\bibliographystyle{abbrv}
\bibliography{bibliography}

\end{document}